\theoremstyle{definition}
\newtheorem{theorem}{Theorem}
\newtheorem{cor}[theorem]{Corollary}
\newtheorem{lemma}[theorem]{Lemma}
\theoremstyle{definition}
\newtheorem{definition}{Definition}
\newtheorem{remark}{Remark}
\newtheorem{assumption}{Assumption}
\newtheorem{example}{Example}
\newcommand{\cC}{\mathcal{C}}
\newcommand{\cE}{\mathcal{E}}
\global\long\def\dd{\mathrm{d}}
\newcommand{\one}{\boldsymbol{1}}
\newcommand{\bEx}{\ensuremath{\mathbb{E}}}
\newcommand{\ex}[1]{\ensuremath{\mathbb{E}\left[ #1\right]}}
\newcommand{\pr}[1]{\ensuremath{\mathbb{P}\left[ #1\right]}}
\DeclareMathOperator{\var}{\sf Var}
\newcommand{\DKL}[2]{\ensuremath{D\left( #1 \, \|  \, #2 \right)}}
\newcommand{\Pmin}{\gamma_\mathrm{min}}
\newcommand{\Pmax}{\gamma_\mathrm{max}}
\newcommand{\Vg}{V_g} 
\newcommand{\Va}{V_a} 
\newcommand{\set}{\cE}
\DeclareMathOperator{\gtr}{tr}
\newcommand{\reals}{\mathbb{R}}
\newcommand{\eps}{\epsilon}
\newcommand{\normal}{\mathcal{N}}
\newcommand{\gmid}{\! \mid \!}
\let\originalleft\left
\let\originalright\right
\renewcommand{\left}{\mathopen{}\mathclose\bgroup\originalleft}
\renewcommand{\right}{\aftergroup\egroup\originalright}
\title{Conditional Central Limit Theorems\\ for Gaussian Projections}
\author{Galen Reeves
\thanks{The work of G.\ Reeves was supported in part by funding from the Laboratory for Analytic Sciences (LAS).  Any opinions, findings, conclusions, and recommendations expressed in this material are those of the author and do not necessarily reflect the views of the sponsors. }
\thanks{G.~Reeves is with the Department of Electrical and Computer Engineering and the Department of Statistical Science, Duke University, Durham, NC 27708 USA (e-mail: galen.reeves@duke.edu).}
}
\begin{document}

\maketitle

\begin{abstract}
This paper addresses the question of when projections of a high-dimensional random vector are approximately Gaussian. This problem has been studied previously in the context of  high-dimensional data analysis, where the focus is on low-dimensional projections of high-dimensional point clouds. The focus of this paper is on the typical behavior when the projections are generated by an i.i.d.\ Gaussian projection matrix. The main results are bounds on the deviation between the conditional distribution of the projections and a Gaussian approximation, where the conditioning is on the projection matrix. The bounds are given in terms of the quadratic Wasserstein distance and relative entropy  and are stated explicitly as a function of the number of projections and certain key properties of the random vector. The proof uses Talagrand's transportation inequality and a general integral-moment inequality for mutual information. Applications to random linear estimation and compressed sensing are discussed.  
\end{abstract}

\begin{IEEEkeywords}
Central Limit Theorems, Compressed Sensing,  High-dimensional Data Analysis, Random Projections
\end{IEEEkeywords}

\section{Introduction}

A somewhat surprising phenomenon is that the distributions of certain weighted sums  (or projections) of random variables can be close to Gaussian, even if the variables themselves have a nontrivial dependence structure. This fact can be traced back to the work of  Sudakov~\cite{sudakov:1978}, who showed that under mild conditions, the distributions of most one-dimensional projections of a high-dimensional vector are close to Gaussian. An independent line of work by Diaconis and Freedman \cite{diaconis:1984} provides similar results for projections of high-dimensional point clouds. In both cases, it is shown that the phenomenon persists with high probability when the weights are drawn randomly from the uniform measure on the sphere. Ensuing work \cite{hall:1993, weizsacker:1997, anttila:2003,bobkov:2003,naor:2003,dasgupta:2006, klartag:2007, klartag:2007a, meckes:2010aa,meckes:2012, dumbgen:2013, leeb:2013} has  generalized and strengthened these results in several directions, including the case of multivariate projections. 

Most related to the current paper is the recent line of work by  Meckes \cite{meckes:2010aa, meckes:2012}, who provides bounds with respect to the bounded-Lipschitz metric when the projections are distributed uniformly on the Stiefel manifold. Meckes shows that, under certain assumptions on a sequence of $n$-dimensional random vectors,  the distribution of the projections  are close to  Gaussian  provided that the number of projections $k$ satisfies  $k < 2 \log n / \log \log n$. Meckes also shows that this condition cannot be improved in general.

The focus of this paper is on the typical behavior when the projections are generated randomly and independently of the random variables.  Given an $n$-dimensional random vector $X$,  the $k$-dimensional linear projection $Z$ is defined according to
\begin{align}
Z = \Theta X, \label{eq:Z_def}
\end{align}
where $\Theta$ is a $k \times n$ random matrix that is independent of $X$. Throughout this paper it  assumed that $X$ has finite second moment and that the entries of $\Theta$ are i.i.d.\ Gaussian random variables with mean zero and variance $1/n$.

Our main results are bounds on the deviation between the conditional distribution of $Z$  given $\Theta$ and a Gaussian approximation. These bounds are given in terms of the quadratic Wasserstein distance and relative entropy and are stated explicitly as a function of the number of projections $k$ and certain properties of the distribution on $X$. For example, under the same assumptions used by Meckes \cite[Corollary~4]{meckes:2012}, we show that
\begin{align*}
\ex{ W_2^2(P_{Z \mid \Theta},G_Z)}   \le  C\, \left( n^{-\frac{1}{4}} +  k \,   n^{-\frac{2}{k+4}} \right),
\end{align*}
where the expectation is with respect to the random matrix $\Theta$, $W_2(\cdot, \cdot)$ denotes the quadratic Wasserstein distance, and $G_Z$ is the Gaussian distribution with the same mean and covariance as $Z$.

In comparison with previous work, one of the contributions of this paper is that our results provide a stronger characterization of the approximation error. Specifically, the analysis requires fewer assumptions about the distribution of $X$ and the bounds are stated with respect to stronger measures of statistical distance, namely the quadratic Wasserstein distance and relative entropy.  

A further contribution of the paper is given by our proof technique, which appears to be quite different from previous approaches. The first step in our proof is to characterize the conditional distribution of $Z$ after it has been passed through an additive white Gaussian noise (AWGN) channel of noise power $t \in (0,\infty)$. In particular,  the $k$-dimensional random vector $Y$ is defined according to
\begin{align}
Y =Z + \sqrt{t} N, \label{eq:Y_def} 
\end{align} 
where $N \sim \normal(0,I_k)$ is independent of $Z$. The bulk of the work is to bound the relative entropy between the conditional distribution of $Y$ given $\Theta$ and the Gaussian distribution with the same mean and covariance as $Y$. To this end, we take advantage of a general integral-moment inequality (Lemma~\ref{lem:MI_to_M_kp}) that allows us to bound the mutual information $I(Y; \Theta)$ in terms of the variance of the density function of $P_{Y | \Theta}$. Note that this density is guaranteed to exists because of the added Gaussian noise. 

The next step in our proof is to use the fact that the square of the conditional density can be expressed as an expectation with respect to two independent copies of $X$ using the identity:
\[
p^2_{Y | \Theta}(y | \Theta)  = \ex{ p_{Y | X,  \Theta}(y | X_1,\Theta)\,  p_{Y | X,  \Theta}(y | X_2, \Theta) \mid \Theta},
\]
where the expectation is with respect to independent vectors  $X_1$ and $X_2$ with the same distribution as $X$. By swapping the order of expectation between $\Theta$ and the pair $(X_1,X_2)$, we are then able to obtain closed form expressions for integrals involving the variance of the density. These expressions lead to explicit bounds with respect to the relative entropy (Theorem~\ref{thm:DKL_general}). 

Finally, the last step of our proof leverages Talagrand's transportation inequality \cite{talagrand:1996}, to obtain bounds on the conditional distribution of $Z$ given $\Theta$  with respect to the quadratic Wasserstein distance (Theorem~\ref{thm:W2_general}). This step requires careful control of the behavior of the conditional distribution $P_{Y|\Theta}$ in the limit as  the noise power $t$ converges to zero. 

One of the primary motivations for this work comes from the author's recent work on the asymptotic properties of a certain random linear estimation problem \cite{reeves:2016, reeves:2016a}. In particular, Theorem~\ref{thm:DKL_univariate} of this paper plays a key role in rigorously characterizing certain phase transitions that had been predicted using the heuristic replica method from statistical physics \cite{reeves:2012a}.  More generally, we believe that the results in this paper could be useful for the analysis of algorithms that rely on Gaussian approximations for weighted sums of large numbers of random variables. These include, for example, expectation propagation \cite{minka:2001}, expectation consistent approximate inference \cite{opper:2005},  relaxed belief propagation \cite{guo:2006},  and the rapidly growing class of algorithms based on approximate message passing \cite{donoho:2009a,bayati:2011, rangan:2011}. Another potential application for our results is to provide theoretical guarantees for approximate inference. Some initial work in this direction is described in \cite{boom:2015,boom:2015a}. 

\subsection{Statement of Main Results} 
Before we can state our main results we need some additional definitions. The quadratic Wasserstein distance between distributions $P$ and $Q$ on $\reals^k$ is defined according to
\begin{align*}
W_2(P,Q) & = \inf  \left( \ex{ \| U - V\|^2 }  \right)^\frac{1}{2},
\end{align*}
where the infimum is over all couplings of the random vectors $(U,V)$ obeying the marginal constraints $U \sim P$ and $V \sim Q$, and  $\|\cdot\|$ denotes the Euclidean norm. The quadratic Wasserstein distance metrizes the convergence of distributions with finite second moments; see e.g., \cite{villani:2003}. 

Another measure of the discrepancy between distributions $P$ and $Q$ is given by the relative entropy (also known as Kullback-Leibler divergence), which is defined according to 
\begin{align*}
\DKL{P}{Q} & = \int  \log\left( \frac{\dd P}{\dd Q}\right)  \dd P,
\end{align*}
provided that $P$ is absolutely continuous with respect to $Q$ and the integral exists.  Relative entropy is not a metric since it is not symmetric and does not obey the triangle inequality. Convergence with respect to relative entropy, which is sometimes referred to convergence in information, is much stronger than convergence in distribution \cite{csiszar:1984}.

The main results of this paper are bounds on the conditional distributions of the random projection $Z$ defined in \eqref{eq:Z_def} and the noisy random projection $Y$ defined in \eqref{eq:Y_def}. The marginal distributions of these vectors are denoted by $P_Z$ and $P_Y$ and the Gaussian distributions with the same mean and covariance are denoted by $G_Z$ and $G_Y$.   The conditional distributions corresponding to the random matrix $\Theta$ are denoted by $P_{Z|\Theta}$ and $P_{Y|\Theta}$. Using this notation, the marginal distributions can be expressed as $P_{Z} = \ex{ P_{ Z| \Theta}}$ and $P_{Y} = \ex{ P_{Y | \Theta}}$ where the expectation is with respect to $\Theta$.

The following definition describes the properties of the distribution of $X$ that are needed for our bounds. 

\begin{definition}
For any $n$-dimensional random vector $X$ with $\ex{ \|X\|^2} < \infty$, the functions $\alpha(X)$ and $\beta_r(X)$ are defined according to
\begin{align*}
\alpha(X) & = \frac{1}{n}  \ex{ \left| \|X\|^2 -  \ex{ \|X\|^2}  \right|} \\
\beta_r(X) &=\frac{1}{n}  \left( \ex{ \left| \langle X_1, X_2 \rangle  \right|^r} \right)^{\frac{1}{r}} ,
\end{align*}
where $r\in \{1,2\}$, $\langle \cdot, \cdot \rangle$ denotes the Euclidean inner product between vectors and $X_1$ and $X_2$ are independent vectors with the same distribution as $X$. 
\end{definition}

The function $\alpha(X)$ measures the deviation of the squared magnitude of $X$ about its expectation. The function $\beta_r(X)$ is non-decreasing in $r$. It is straightforward to show that the case $r = 2$ can be expressed equivalently as  $ \beta_2(X)   =    \left\| \frac{1}{n} \ex{X X^T} \right \|_F$, where $\|\cdot\|_F$ denotes the Frobenius norm.

\begin{assumption}[IID Gaussian Projections]\label{assumption:Gaussian_proj}The entries of the $k \times n$ matrix $\Theta$ are i.i.d.\  Gaussian random variables with mean zero and variance $1/n$. 
\end{assumption}

\begin{assumption}[Finite Second Moment]\label{assumption:finite_monent} The $n$-dimensional random vector $X$ has finite second moment: $\frac{1}{n} \ex{ \|X\|^2} = \gamma \in (0,\infty)$.
\end{assumption}

Under Assumptions 1 and 2, the marginal distribution of $Z$ has mean zero and covariance $ \gamma I_k$,  and thus the Gaussian approximations are given by $G_Z = \normal(0, \gamma I_k)$ and $  G_Y  = \normal\left(0, (\gamma + t) I_k\right)$. Furthermore, the functions $\alpha(X)$ and $\beta_2(X)$ satisfy:
\begin{align*}
0\le \frac{\alpha(X)}{\gamma} \le 2 \qquad \text{and} \qquad \frac{1}{\sqrt{n}} \le \frac{\beta_2(X)}{ \gamma} \le 1.
\end{align*}
The main results of the paper are given in the following theorems. 

\begin{theorem}\label{thm:W2_general} Under Assumptions 1 and 2, the quadratic Wasserstein distance between the conditional distribution of $Z$ given $\Theta$ and Gaussian distribution with the same mean and covariance as $Z$ satisfies 
\begin{align*}
\MoveEqLeft \frac{1}{\gamma} \ex{ W_2^2(P_{Z \mid \Theta},G_Z)}   \le C\,  k\,   \frac{ \alpha(X)}{\gamma}   \\
& \quad +  C\,   k^\frac{3}{4} \left(\frac{ \beta_1(X)}{\gamma}\right)^\frac{1}{2}   +   C\,  k  \left(\frac{\beta_2(X)}{ \gamma} \right)^\frac{4}{k+4} ,  
\end{align*}
where $C$ is a universal constant. In particular, the inequality holds with $C = 40$. 
\end{theorem}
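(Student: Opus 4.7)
\textbf{Proof strategy for Theorem~\ref{thm:W2_general}.} The plan follows the three-stage blueprint sketched in the introduction: (i) work with the smoothed projection $Y = Z + \sqrt{t} N$ whose conditional law admits a density, (ii) bound the expected relative entropy $\ex{\DKL{P_{Y\mid\Theta}}{G_Y}}$ (this is the content of Theorem~\ref{thm:DKL_general}, which I take as given), and (iii) transport the KL bound back to a $W_2$ bound on $P_{Z\mid\Theta}$ by invoking Talagrand's inequality against the Gaussian target $G_Y$ and then optimizing over $t$.

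First I would use the Gaussian Talagrand transportation inequality~\cite{talagrand:1996}, which for the spherically symmetric Gaussian $G_Y = \normal(0,(\gamma+t) I_k)$ yields, pointwise in $\Theta$,
\[
W_2^2(P_{Y\mid\Theta}, G_Y) \;\le\; 2(\gamma+t)\, \DKL{P_{Y\mid\Theta}}{G_Y}.
\]
Taking expectation over $\Theta$ and plugging in Theorem~\ref{thm:DKL_general} gives a bound on $\ex{W_2^2(P_{Y\mid\Theta}, G_Y)}$ expressed explicitly in terms of $t$, $k$, $\alpha(X)$, $\beta_1(X)$, and $\beta_2(X)$.

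Next I would pass from $Y$ back to $Z$ using the canonical synchronous coupling $Y = Z + \sqrt{t}N$. Since it is a valid coupling of $P_{Z\mid\Theta}$ and $P_{Y\mid\Theta}$ with cost $\ex{\|Y-Z\|^2 \mid \Theta} = kt$, we get $W_2^2(P_{Z\mid\Theta}, P_{Y\mid\Theta}) \le kt$ pointwise, and likewise $W_2^2(G_Z, G_Y) \le kt$ (the Gaussians differ only by an independent $\sqrt{t}N$ term). Applying the triangle inequality for $W_2$ followed by $(a+b+c)^2 \le 3(a^2+b^2+c^2)$ yields
\[
W_2^2(P_{Z\mid\Theta}, G_Z) \;\le\; 3\, W_2^2(P_{Y\mid\Theta}, G_Y) \,+\, 6 k t,
\]
and this inequality is valid uniformly in $t>0$, so no delicate $t\downarrow 0$ limit is actually required.

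The final step is to minimize the resulting expression over $t>0$. I would treat the three terms of the relative-entropy bound (the ones driven by $\alpha(X)$, $\beta_1(X)$, and $\beta_2(X)$ respectively) separately, in each case balancing $2(\gamma+t)$ times the corresponding piece of $\ex{\DKL{P_{Y\mid\Theta}}{G_Y}}$ against the $6kt$ penalty. I expect the $\alpha$ piece to produce the linear-in-$\alpha(X)/\gamma$ term with $t$ chosen proportional to $\gamma$; the $\beta_1$ piece to produce the $k^{3/4}(\beta_1(X)/\gamma)^{1/2}$ term after choosing $t$ of order $\gamma\, k^{-1/4}(\beta_1(X)/\gamma)^{1/2}$; and the $\beta_2$ piece to produce the $k(\beta_2(X)/\gamma)^{4/(k+4)}$ term after choosing $t$ of order $\gamma(\beta_2(X)/\gamma)^{4/(k+4)}$. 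Summing the three optima reproduces the stated three-term bound.

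The main obstacle I anticipate is purely bookkeeping: the universal constant $C=40$ forces me to track carefully every numerical factor introduced by the $(a+b+c)^2 \le 3(a^2+b^2+c^2)$ step, the factor $2(\gamma+t)$ in Talagrand, the $6kt$ penalty, and the AM-GM optimizations that balance $t$-powers against the three KL ingredients. A subtler point is that the optimizers of $t$ for the $\alpha$ and $\beta_r$ terms are only admissible in a range where $t$ is comparable to $\gamma$, so I must also verify that $\gamma+t$ can be absorbed into $\gamma$ (up to a constant) in each regime; this is where loose factors accumulate, and I would check that the three separately optimized bounds can be added with a universal constant no larger than $40$.
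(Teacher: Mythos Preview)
Your plan has a genuine gap, and the paper explicitly flags it. You propose to plug Theorem~\ref{thm:DKL_general} into Talagrand, obtaining
\[
\frac{1}{\gamma}\ex{W_2^2(P_{Z\mid\Theta},G_Z)} \;\lesssim\; \frac{kt}{\gamma} + \frac{\gamma+t}{\gamma}\,\ex{\DKL{P_{Y\mid\Theta}}{G_Y}},
\]
and then ``treat the three terms separately'' by choosing a different $t$ for each. But there is only one $t$ to choose: the inequality is a single upper bound valid for each fixed $t$, not a sum of three bounds that can be optimised independently. Concretely, the $\beta_2$ term forces $t$ of order $\gamma(\beta_2(X)/\gamma)^{4/(k+4)}$, and at that scale the $\alpha$ term in Theorem~\ref{thm:DKL_general}, which carries a factor $\log(1+\gamma/t)$, becomes of order $k\,\tfrac{\alpha(X)}{\gamma}\cdot \tfrac{4}{k+4}\log\!\big(\gamma/\beta_2(X)\big)$. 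That extra $\log(\gamma/\beta_2(X))$ factor is not in the statement of Theorem~\ref{thm:W2_general} and cannot be absorbed into a universal constant (for instance, with $\beta_2(X)/\gamma\asymp n^{-1/2}$ and $k$ fixed it grows like $\log n$).

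The paper's proof avoids this coupling by decoupling $\alpha$ from $\beta$ \emph{before} invoking Talagrand. It introduces a truncation/projection to the sphere (Lemma~\ref{lem:truncation_u}): restrict $X$ to a shell $\{\,|\tfrac{1}{n}\|x\|^2-\gamma|\le \gamma/2\,\}$, rescale to the sphere of radius $\sqrt{n\gamma}$, and pay a Wasserstein cost of order $k\,\alpha(X)$ for this conditioning. The remaining vector has constant norm, so Theorem~\ref{thm:DKL_sphere} (which has \emph{no} $\alpha$ term) feeds into the $W_2$--KL machinery to give Theorem~\ref{thm:W2_sphere}, and now the optimisation over $t$ involves only $\beta_1$ and $\beta_2$. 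Your pipeline through Theorem~\ref{thm:DKL_general} is exactly the route the paper considers and rejects in the first paragraph of Section~\ref{sec:W2_bounds}.B for this reason.
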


\begin{theorem}\label{thm:DKL_general}
Under Assumptions 1 and 2, the relative entropy between the conditional distribution of $Y$ given $\Theta$ and Gaussian distribution with the same mean and covariance as $Y$ satisfies
\begin{align*}
\MoveEqLeft[1]  \ex{ \DKL{P_{Y \mid \Theta}}{G_Y} }  \le C \, k \,  \log\left(1 + \frac{\gamma}{t} \right)  \frac{\alpha(X)}{ \eps \gamma} \\
 &  \quad + C \,      k^\frac{3}{4}  \left(  \frac{\beta_1(X)}{\gamma}\right)^\frac{1}{2}   +  C\,  k^\frac{1}{4} \left(1 + \frac{ (2+\eps)\, \gamma }{t} \right)^\frac{k}{4} \frac{  \beta_2(X)}{  \gamma },
 \end{align*}
 for all $t \in (0,\infty)$ and $\eps \in (0,1]$ where $C$ is a universal constant. In particular, the inequality holds with $C =3$. 
\end{theorem}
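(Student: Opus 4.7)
The plan is to reduce $\ex{\DKL{P_{Y \mid \Theta}}{G_Y}}$ to a tractable second-moment functional of the conditional density, evaluate that functional in closed form via the two-copy identity highlighted in the introduction, and bound the result by a truncation argument in the deviations of $(X_1, X_2)$ from the typical scale. The first step is the decomposition $\ex{\DKL{P_{Y \mid \Theta}}{G_Y}} = I(Y;\Theta) + \DKL{P_Y}{G_Y}$, to which I would apply Lemma~\ref{lem:MI_to_M_kp} in order to bound $I(Y;\Theta)$ by an integral-moment functional of the variance of $p_{Y\mid\Theta}(y\mid\Theta)$; the marginal term $\DKL{P_Y}{G_Y}$ fits the same framework via the pointwise inequality $D_{KL}(P\|Q) \le \chi^2(P\|Q)$. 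Either route reduces the problem to estimating
\[
\ex{\int \frac{p_{Y\mid\Theta}(y\mid\Theta)^2}{g_Y(y)}\,\dd y} - 1,
\]
where $g_Y$ is the density of $G_Y = \normal(0, (\gamma+t)I_k)$.

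Next I would apply the two-copy identity $p^2_{Y\mid\Theta}(y\mid\Theta) = \ex{p_{Y\mid X,\Theta}(y\mid X_1,\Theta)\,p_{Y\mid X,\Theta}(y\mid X_2,\Theta) \mid \Theta}$ and exchange the order of expectation between $\Theta$ and $(X_1, X_2)$. Since $p_{Y\mid X,\Theta}(\cdot \mid x, \theta)$ is the density of $\normal(\theta x, tI_k)$, the $y$-integral becomes a product of Gaussian kernels divided by a third Gaussian and integrates in closed form, while the subsequent $\Theta$-expectation is a Gaussian moment-generating-function computation that evaluates as the $-k/2$ power of a $2\times 2$ determinant in the Gram matrix of $(X_1, X_2)/\sqrt{n}$. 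The outcome is a scalar closed form depending on $(X_1, X_2)$ only through $S_i := \|X_i\|^2/n$ and $T := \langle X_1, X_2\rangle/n$; at the reference point $(S_1, S_2, T) = (\gamma, \gamma, 0)$ its value is $1$ and the integrand there coincides with the Gaussian approximation.

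The final step is to bound the expectation of the deviation of this closed form from $1$. I would split on whether $(S_1, S_2, T)$ lies within a window of size $\eps\gamma$ around the reference. On the bulk event, Taylor expansion in the deviations, using the explicit form of the determinantal factor, produces contributions bounded by the $L^1$ statistics $\alpha(X)/(\eps\gamma)$ and $(\beta_1(X)/\gamma)^{1/2}$; the $\log(1+\gamma/t)$ factor arises naturally in the integral-moment framework when the linearized kernel is integrated over the $y$-range. On the complementary tail event, Cauchy--Schwarz against $\beta_2(X)$ absorbs the exponential-in-$k$ blow-up of the kernel, yielding the third term with its pre-factor $(1+(2+\eps)\gamma/t)^{k/4}$.

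The main obstacle is precisely this last step. The closed-form determinant can degenerate to zero, so the kernel is exponentially large in $k$ on one side of the reference, and consequently no single one of $\alpha(X), \beta_1(X), \beta_2(X)$ can control all regimes on its own. The cutoff $\eps\gamma$ must therefore be treated as a free parameter that trades a sharper bulk bound (small $\eps$ tightens the Taylor estimate) against a larger tail pre-factor (small $\eps$ inflates $(1+(2+\eps)\gamma/t)^{k/4}$ only modestly), which is precisely the trade-off exposed by the parameter $\eps \in (0,1]$ in the theorem statement.
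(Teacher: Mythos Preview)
Your high-level architecture---decompose as $\DKL{P_Y}{G_Y}+I(Y;\Theta)$, invoke Lemma~\ref{lem:MI_to_M_kp}, use the two-copy identity to compute the density moments, and truncate---matches the paper. But two of the mechanisms you describe are off, and the first is a genuine gap.

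\textbf{Where the $\log(1+\gamma/t)\,\alpha(X)/(\eps\gamma)$ term actually comes from.} This term does \emph{not} come from Taylor-expanding the determinantal kernel on the bulk, and a logarithm will not ``arise naturally when the linearized kernel is integrated over the $y$-range'': the closed form from Lemma~\ref{lem:M_kp_characterization} is a power of $(1+\gamma/t)$, and linearizing it produces polynomial factors in $\gamma/t$, not $\log(1+\gamma/t)$. In the paper the logarithm is the Gaussian channel capacity $\tfrac{k}{2}\log(1+\gamma/t)$ and enters in two places that bypass the second-moment machinery entirely. First, $\DKL{P_Y}{G_Y}$ is handled by the further decomposition $\DKL{P_Y}{G_Y}=\ex{\DKL{P_{Y\mid X}}{G_Y}}-I(Y;X)$; since $P_{Y\mid X}=\normal(0,(t+\tfrac{1}{n}\|X\|^2)I_k)$, the first term is an explicit Gaussian KL, and Lemma~\ref{lem:log_dev_inq} converts it to $\tfrac{k}{2}\log(1+\gamma/t)\,\alpha(X)/\gamma$. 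Second, for $I(Y;\Theta)$ the paper uses a conditioning lemma (Lemma~\ref{lem:DKL_decomp}) that splits on the event $\{X\in\set\}$ with $\set=\{x:|\tfrac{1}{n}\|x\|^2-\gamma|\le\tfrac{\eps}{2}\gamma\}$; on the complement one bounds $I(Y;\Theta\mid X\notin\set)\le I(Y;\Theta\mid X,X\notin\set)$, which is again an explicit Gaussian mutual information bounded by $\tfrac{k}{2}\log(1+\gamma/t)$, multiplied by $P_X(\set^c)\le 2\alpha(X)/(\eps\gamma)$ via Markov. So the $\alpha$ term with its logarithmic prefactor comes from the \emph{tail} event, not the bulk, and from capacity bounds, not from the two-copy computation. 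Your proposal to bound $\DKL{P_Y}{G_Y}$ by $\chi^2(P_Y,G_Y)$ would not recover this structure.

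\textbf{Where the truncation sits.} The paper truncates the single copy $X$ at the mutual-information level (Lemma~\ref{lem:DKL_decomp}), \emph{before} invoking Lemma~\ref{lem:MI_to_M_kp} and the two-copy identity. Only the bulk contribution $I(Y;\Theta\mid X\in\set)$ is passed through the moment machinery, and because $X$ restricted to $\set$ has $\tfrac{1}{n}\|X\|^2\in[(1-\tfrac{\eps}{2})\gamma,(1+\tfrac{\eps}{2})\gamma]$ almost surely, one can apply the bounded-magnitude moment bound (Lemma~\ref{lem:M_kq_bound}) to get the $\beta_1$ and $\beta_2$ terms with the stated constants. Truncating $(S_1,S_2,T)$ inside the closed-form integrand, as you propose, loses access to the capacity bound on the tail and forces you to control the determinantal kernel in a regime where it can blow up like $t^{-k/2}$; Cauchy--Schwarz against $\beta_2$ alone will not tame this without the prior restriction to bounded magnitude.
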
 

To interpret these results, it is useful to consider the setting where the functions $\alpha(X)$ and $\beta_2(X)$ are upper bounded by $C\, \gamma / \sqrt{n}$ for some fixed constant $C$.  This occurs, for example, when the entries of $X$ are independent with mean zero and finite fourth moments. 

\begin{cor}\label{cor:W2} Consider Assumptions 1 and 2. For any $n$-dimensional random vector $X$ satisfying
\begin{align*}
 \frac{\alpha(X)}{\gamma}  \le \frac{C}{ \sqrt{n}},  \qquad \frac{\beta_2(X)}{\gamma}  \le \frac{C } {\sqrt{n}},
\end{align*}
the quadratic Wasserstein distance satisfies
\begin{align*}
 \frac{1}{\gamma} \ex{ W_2^2(P_{Z \mid \Theta},G_Z)} &  \le C'\, \left( n^{-\frac{1}{4}} +  k \,   n^{-\frac{2}{k+4}} \right).
\end{align*}
\end{cor}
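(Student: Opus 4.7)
The plan is to derive the corollary as a direct corollary of Theorem~\ref{thm:W2_general} by substituting the hypothesized bounds on $\alpha(X)/\gamma$ and $\beta_2(X)/\gamma$ and then showing that each of the three resulting terms is dominated by $n^{-1/4}+k\,n^{-2/(k+4)}$ up to an absolute constant. The only auxiliary observation needed beforehand is that $\beta_r(X)$ is non-decreasing in $r$, so the hypothesis on $\beta_2$ automatically gives $\beta_1(X)/\gamma \le C/\sqrt{n}$; this lets us feed both $\beta$-terms into Theorem~\ref{thm:W2_general}.

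After substitution, the three resulting terms are of order $k\,n^{-1/2}$, $k^{3/4}\,n^{-1/4}$, and $k\,n^{-2/(k+4)}$, and I would handle each one separately. The third is already in the target form. For the first, I would note that $2/(k+4)\le 1/2$ for every $k\ge 0$, hence $n^{-1/2}\le n^{-2/(k+4)}$, and therefore $k\,n^{-1/2}\le k\,n^{-2/(k+4)}$. The second, $k^{3/4}n^{-1/4}$, is the one term that does not obviously fit either building block, so its disposition is the main (and really the only) thing to check carefully.

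To deal with $k^{3/4}n^{-1/4}$, I would split into two regimes based on the crossover $k=4$ where the exponents $1/4$ and $2/(k+4)$ coincide. For $k\le 4$ the factor $k^{3/4}$ is bounded by an absolute constant, so $k^{3/4}n^{-1/4}\lesssim n^{-1/4}$. For $k\ge 4$ one has $2/(k+4)\le 1/4$, so $n^{-1/4}\le n^{-2/(k+4)}$ and therefore
\begin{align*}
k^{3/4} n^{-1/4} \;\le\; k\, n^{-1/4} \;\le\; k\, n^{-2/(k+4)},
\end{align*}
which gives the other half of the bound. Combining the two regimes yields $k^{3/4}n^{-1/4}\le C''\bigl(n^{-1/4}+k\,n^{-2/(k+4)}\bigr)$ for a universal constant.

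Putting the three pieces back together produces exactly the claimed inequality with a suitable absolute constant $C'$ (obtained by aggregating the constants $C$ from Theorem~\ref{thm:W2_general}, the constant appearing in the hypothesis, the factor $4^{3/4}$ from the small-$k$ case, and numerical factors from taking square roots). No step requires anything beyond elementary manipulation of the exponents, and I do not foresee any real obstacle; the only subtle point is noticing that the middle term $k^{3/4}n^{-1/4}$ needs a case split around $k=4$, which is precisely the value at which the two terms in the target bound balance.
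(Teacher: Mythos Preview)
Your proposal is correct and follows essentially the same approach as the paper: invoke Theorem~\ref{thm:W2_general}, use $\beta_1(X)\le\beta_2(X)$, substitute the hypotheses, and absorb the resulting terms into the two building blocks of the target bound. The paper's own proof is a one-liner that simply says to ``retain only the dominant terms,'' whereas you have carefully spelled out the case split around $k=4$ that justifies this; your argument is more detailed but not different in substance.
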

\begin{proof}
This result follows from combining Theorem~\ref{thm:W2_general} with the fact that $\beta_1(X) \le \beta_2(X)$, and then retaining only the dominant terms in the bound. 
\end{proof}

The proof of Theorem~\ref{thm:DKL_general} is given in Section~\ref{sec:KL_bounds}, which also provides some additional results. The proof of Theorem~\ref{thm:W2_general} is given in Section~\ref{sec:W2_bounds}. 

\subsection{Relation to Prior Work} 

We now compare our results to previous work in the literature.  The bounded-Lipschitz distance  between distributions $P$ and $Q$ on $\reals^k$ is defined according to
 \begin{align*}
d_\mathrm{BL}(P,Q)  & = \sup_{f}  \left | \int f\,  \dd P - \int f\,  \dd Q \right|, 
\end{align*}
where the supremum is over all functions $f : \reals^k \to [-1, 1]$ that are Lipschitz continuous with Lipschitz constant one. Convergence with respect to the bounded-Lipschitz distance is equivalent to convergence in distribution (also known as weak convergence). 

One of the central questions in the literature has been to provide conditions under which the conditional distribution of $Z$ given $\Theta$ converges to a Gaussian distribution weakly in probability for a sequence of problems indexed by the vector length $n$. Formally, this can be stated as
\begin{align}
\lim_{n\to \infty} \pr{ d_\mathrm{BL}(P_{Z| \Theta}, G_Z) > \eps}  =  0 \quad \text{for all $\eps > 0$},  \label{eq:converge_prob} 
\end{align} 
where the probability is with respect to the random matrix $\Theta$. 

When the number of projections $k$ is fixed, D\"umbgen and Zerial \cite{dumbgen:2013} show that  a necessary and sufficient condition for \eqref{eq:converge_prob} is given by
\begin{align}
\alpha(X) \to 0 \quad \text{and} \quad \beta_2(X) \to 0 \quad \text{as $n\to\infty$}.   \label{eq:cond_converge}
\end{align}
Strictly speaking,  \cite[Theorem~2.1]{dumbgen:2013} is stated in terms of convergence in probability, whereas $\alpha(X)$ and $\beta_r(X)$ correspond to expectations. However, under the assumption that $X$ has finite second moment, these conditions are equivalent. 

The sufficiency of \eqref{eq:cond_converge} can also be seen as a consequence of Theorem~\ref{thm:W2_general} and the fact that convergence with respect to the Wasserstein metric implies convergence in distribution. Moreover, the fact that \eqref{eq:cond_converge} is a necessary condition means that the dependence of our analysis on $\alpha(X)$ and $\beta_2(X)$ is optimal in the sense than any result bounding convergence in distribution must depend on these quantities. 

Another problem of interest is to characterize conditions under which \eqref{eq:converge_prob} holds in the setting where the number of projections increases with the vector length.  In this direction, Meckes \cite[Theorem~3]{meckes:2012} provides explicit bounds  with respect to the bounded-Lipchitz metric. Under the assumptions $\alpha(X) \le C / \sqrt{n}$ and $\lambda_\mathrm{max} \left( \ex{  X X^T} \right) \le C$ for some fixed constant $C$, Meckes shows that \eqref{eq:converge_prob} holds in the limit as both $k$ and $n$ increase to infinity provided that
\begin{align}
k  \le  \frac{ \delta \log n}{ \log \log n},   \label{eq:cond_converge2}
\end{align}
for some $\delta \in [0,2)$. Meckes also shows that this scaling is sharp in the the sense that  if $k = \delta \log n / \log \log n$ for some $\delta > 2$, then there exists a sequence of distributions for which \eqref{eq:converge_prob} does not hold. 

For comparison with the results in this paper, observe that that the function $\beta_2(X)$ satisfies 
\begin{align*}
\beta_2(X) & = \frac{1}{n} \sqrt{ \sum_{i=1}^n \lambda^2_i (\ex{ XX^T})}  \le \frac{1}{\sqrt{n}} \lambda_\mathrm{max}(\ex{ XX^T}),
\end{align*}
where equality is attained if and only if $\ex{XX^T}$ is proportional to the identity matrix.  Therefore, the condition on $\beta_2(X)$ in Corollary~\ref{cor:W2} is satisfied whenever $\lambda_\mathrm{max}(\ex{ XX^T}) \le C$. It is easy to verify that the scaling conditions under which the bound in Corollary~\ref{cor:W2} converges to zero are  the same as the conditions  given by Meckes.  As a consequence, we see that the scaling behavior of our results cannot be improved in general. Furthermore, we note that there can exist cases where the maximum eigenvalue $\lambda_\mathrm{max}(\ex{ XX^T})$ increases with the problem dimension while  $\beta_2(X)  \le C/\sqrt{n}$. In these cases,  the scaling conditions implied by our results are stronger than the ones provided by Meckes.  

A number of results in the literature have provided improved rates of convergence under further assumptions on $X$.  For example, Antilla, Ball, and Perissinaki \cite{anttila:2003} provide convergence rates when  $X$ is distributed uniformly on a centrally symmetric convex body, and  Bobkov  \cite{bobkov:2003} provides convergence rates when $X$ has a log-concave distribution.  The results in this paper are also related to the work of Hall and Li \cite{hall:1993} and Leeb \cite{leeb:2013}, who focus on certain properties of the bivariate distribution between two different projections. Finally, our bounds with respect to relative entropy are similar in spirit to work on entropic central limit theorems \cite{barron:1986aa, artstein:2004aa, madiman:2007aa, bobkov:2013,bobkov:2013b, bobkov:2014aa}. In particular, Bobkov \cite{bobkov:2013b}
combines entropic bounds with transportation inequalities to obtain bounds with respect to the Wasserstein metric.

\subsection{Some Consequences of our Results}

Many of the ideas behind our approach come directly from the author's recent work on the asymptotic properties of a certain random linear estimation problem \cite{reeves:2016, reeves:2016a}. For this problem, the behavior of the asymptotic mutual information and minimum mean-square error has been analyzed using the powerful but heuristic replica method from statistical physics \cite{reeves:2012a}. The main result in \cite{reeves:2016, reeves:2016a} is a rigorous proof that results obtained using the replica method are correct. One of the key steps is to study the behavior of the problem as the number of observations increases.  To this end, Theorem~\ref{thm:DKL_univariate} of this paper is used to characterize the conditional distribution of the new observation in terms of simple properties of the posterior distribution induced by the previous observations.

 Another application for the results in this paper is to understand the connections between information-theoretically optimal methods for signal acquisition and the framework of  compressed sensing  \cite{bresler:1999,donoho:2006a,candes:2006}, which seeks to recover an unknown vector from a small number of noisy linear projections. An interesting phenomenon in compressed sensing is that random projections have a certain universality property: projections chosen uniformly at random (subject to a power constraint) are often nearly as a good as projections that are designed optimally based on specific properties of the problem.  This phenomenon has been understood, to some extent, via connections with high-dimensional convex geometry, and in particular to the almost spherical property of low-dimensional sections of convex bodies, as described by Dvoretzky's theorem  (see e.g., \cite{vershynin:2014}). 

Using the results in this paper, we can obtain a more direct explanation for the universality of random projections in compressed sensing. Recall that the capacity of the AWGN channel with signal-to-noise ratio $s$ is given by $\cC(s) = \frac{1}{2} \log(1 + s)$ nats per channel use. The capacity provides an upper bound on the mutual information between the unknown vector and the observations generated according  to the optimal source and channel coding scheme. Meanwhile, the mutual information between the vector $X$ and the noisy linear projections $Y$ described in \eqref{eq:Y_def} corresponds directly to the mutual information that arises in compressed sensing with an i.i.d.\ Gaussian matrix. Interestingly, the gap between the capacity of the AWGN channel and the mutual information $I(X; Y  | \Theta)$ can be related directly to the relative entropy between the conditional distribution $P_{Y|\Theta}$ and the Gaussian approximation $G_Y$ via the following identity:
  \begin{align}
\ex{ \DKL{P_{Y\mid \Theta}}{G_Y}} = k\, \cC(\gamma / t) -  I(X; Y \gmid \Theta). \label{eq:CS_gap}
 \end{align}
The proof of this result follows straightforwardly from the decomposition \eqref{eq:EDKL_alt} given below and properties of differential entropy (see e.g., \cite[Chapter~8.6]{cover:2006}). In words, Identity~\eqref{eq:CS_gap} shows that the expected relative entropy considered in Theorem~\ref{thm:DKL_general} is precisely the difference between the upper bound on the mutual information of the optimal sensing function and the mutual information using an i.i.d.\ Gaussian matrix.  Consequently, whenever this term is small, one can conclude that compressed sensing with a random matrix is near optimal in terms of mutual information.

\subsection{Notation} 
We use $C$ to denote an absolute constant. In all cases, $C$ is positive and finite although the value may change from place to place.  The Euclidean norm is denoted by $\|\cdot\|$. The indicator function of a set $\set$ is denoted by $\one_\set(\cdot)$. The positive part of a number $x$ is denoted by $(x)_+  = \max(x,0)$. All logarithms are stated with respect to the natural base. The multivariate Gaussian distribution with mean $\mu$ and covariance $\Sigma$ is denoted by $\normal(\mu, \Sigma)$. The joint distribution of random variables $X,Y$ is denoted by $P_{X,Y}$ and the marginal distributions are denoted by $P_X$ and $P_Y$. The conditional distribution of $X$ given $Y=y$ is denoted by $P_{X\mid Y = y}$ and the conditional distribution corresponding to a random realization of $Y$ is denoted by $P_{X\mid Y}$.

\section{Bounds on Relative Entropy}\label{sec:KL_bounds} 

\subsection{Decomposition of Relative Entropy} 

The starting point our analysis is based on the following identity for relative entropy and mutual information \cite{topsoe:1967}:
\begin{align}
\bEx \left[ \DKL{P_{Y\mid \Theta} }{  G_{Y} } \right]  & = \DKL{P_{Y}  }{  G_{Y} }  +I( Y; \Theta).  \label{eq:EDKL_alt} 
\end{align}
In this decomposition, the relative entropy $ \DKL{P_{Y}  }{  G_{Y} }$ depends on the difference between the marginal distribution of $Y$ and the Gaussian distribution with the same mean and covariance whereas the mutual information $I(Y;\Theta)$ is a measure of the dependence between $Y$ and $\Theta$.

For the setting considered in this paper, $\DKL{P_{Y}  }{  G_{Y} }$ can be addressed straightforwardly using the further decomposition
\begin{align*}
\DKL{P_{Y}  }{  G_{Y} } & =  \ex{ \DKL{P_{Y |X }  }{  G_{Y}} }  -  I(Y; X) \\
& = \frac{k}{2} \ex{ \log\left( \frac{t + \gamma}{ t + \frac{1}{n} \|X\|^2}  \right)}  -  I(Y; X) ,
\end{align*}
where the second line follows from the fact that the conditional distribution of $Y$ given $X$ is Gaussian with mean zero and covariance $(t + \frac{1}{n} \|X\|^2) I_n$. The first term on the right-hand side is a measure of the deviation of the squared magnitude of $X$ about its expectation. Lemma~\ref{lem:log_dev_inq} in the appendix gives
\begin{align}
\DKL{P_{Y}  }{  G_{Y} } & \le \frac{k}{2} \log\left( 1 + \frac{\gamma}{t} \right)  \frac{ \alpha_1(X)}{ \gamma} . \label{eq:DPY_to_alpha}
\end{align}

\subsection{Mutual Information Inequalities}

Our approach to bounding the mutual information $I(Y;\Theta)$ is based on certain integrals involving the variance of the conditional density of $Y$ given $\Theta$. Let $p_{Y}(y)$ and $p_{Y|\Theta}(y|\theta)$ denote the density functions of $P_{Y}$ and $P_{Y|\Theta =\theta}$, respectively. Note that for fixed $y$ and random $\Theta$ the density $p_{Y|\Theta}(y |\Theta)$ is a random variable whose expectation is given by the marginal density $\ex{ p_{Y|\Theta}(y |\Theta)} = p_{Y}(y)$. The variance of the conditional density is a function from $\reals^k$ to $\reals_+$ that can be expressed as
\begin{align*}
\var(p_{Y|\Theta}(y | \Theta) )  = \ex{ \left( p_{Y|\Theta}(y|\Theta) - p_Y(y) \right)^2}.
\end{align*}

The variance of the conditional density provides a measure of the dependence between $Y$ and $\Theta$. The next result shows that the integral of the square root of the variance gives an upper bound on the mutual information.

\begin{lemma}\label{lem:MI_var_bound} The mutual information satisfies
\begin{align}
I(Y; \Theta) & \le  \kappa \,   \int_{\reals^k} \sqrt{ \var(p_{Y \mid \Theta}(y \gmid \Theta)) } \,  \dd y ,  \notag 
\end{align}
where
\begin{align}
\kappa =  \sup_{x \in (0,\infty) } \log(1 + x)/\sqrt{x}  \approx  0.80474  \label{eq:kappa}.
\end{align}
\end{lemma}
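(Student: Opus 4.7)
The plan is to pass to the posterior via Bayes' rule and chain two elementary inequalities. By symmetry of mutual information and disintegration,
\begin{equation*}
I(Y;\Theta) = \int_{\reals^k} p_Y(y)\, \DKL{P_{\Theta \mid Y = y}}{P_\Theta}\,\dd y.
\end{equation*}
Bayes' rule identifies the Radon--Nikodym derivative as $R(y,\theta) = p_{Y \mid \Theta}(y \gmid \theta)/p_Y(y)$; under $P_\Theta$ this ratio has mean one and variance $\var(p_{Y \mid \Theta}(y \gmid \Theta))/p_Y(y)^2$, which I will recognize as the chi-squared divergence between posterior and prior.

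Next, I would invoke the standard KL-to-chi-squared inequality pointwise in $y$. Since $\DKL{P_{\Theta \mid Y=y}}{P_\Theta} = \int \log R \,\dd P_{\Theta \mid Y=y}$, Jensen's inequality for the concave function $\log$ yields
\begin{equation*}
\DKL{P_{\Theta \mid Y=y}}{P_\Theta} \le \log \int R \,\dd P_{\Theta \mid Y=y} = \log\!\left(1 + \frac{\var(p_{Y\mid\Theta}(y\gmid\Theta))}{p_Y(y)^2}\right),
\end{equation*}
where the final equality uses $\int R \,\dd P_{\Theta\mid Y=y} = \int R^2 \,\dd P_\Theta = 1 + \var_\Theta(R)$.

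Finally, the definition of $\kappa$ in \eqref{eq:kappa} gives $\log(1+u) \le \kappa\sqrt{u}$ for every $u \ge 0$. Substituting $u = \var(p_{Y\mid\Theta}(y\gmid\Theta))/p_Y(y)^2$, multiplying by $p_Y(y)$, and integrating over $y \in \reals^k$ cancels exactly one factor of $p_Y(y)$ and delivers the claimed bound. There is no substantive obstacle in this proof; its essence is the recognition that $\kappa$ is precisely the constant that converts the KL-to-chi-squared bound into a bound by the $L^1$ norm of the pointwise standard deviation of the conditional density. The only detail worth a line of justification is that the added Gaussian noise in \eqref{eq:Y_def} ensures $p_{Y\mid\Theta}(\cdot \gmid \theta)$ is a smooth, everywhere positive density, so the measure-theoretic manipulations above are legitimate and the swap of the $y$-integral with the $\Theta$-expectation is valid by Tonelli's theorem on nonnegative integrands.
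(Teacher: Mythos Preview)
Your proof is correct and follows essentially the same route as the paper: disintegrate $I(Y;\Theta)$ as $\int p_Y(y)\,\DKL{P_{\Theta\mid Y=y}}{P_\Theta}\,\dd y$, bound the pointwise relative entropy by $\log(1+\chi^2(P_{\Theta\mid Y=y},P_\Theta))$, identify this chi-squared divergence with $\var(p_{Y\mid\Theta}(y\gmid\Theta))/p_Y(y)^2$ via Bayes' rule, and then apply $\log(1+u)\le\kappa\sqrt u$. The only cosmetic difference is that you derive the KL--chi-squared inequality inline via Jensen applied to $\log$, whereas the paper cites it as a known fact.
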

\begin{proof}
The chi-squared distance between distributions $P$ and $Q$ with densities $p$ and $q$ with respect to a dominating measure $\mu$ is defined by $\chi^2(P,Q) = \int ( \frac{p}{q} -1)^2   q\,   \dd \mu$. The chi-square dominates the  relative entropy and satisfies  $\DKL{P}{Q} \le \log(1 + \chi^2(P,Q)) \le \kappa \sqrt{ \chi^2(P,Q)}$ where the first inequality is from \cite[Theorem~5]{gibbs:2002} and the second inequality follows from the definition of $\kappa$. Therefore, the mutual information can be upper bounded using 
\begin{align}
I(Y;\Theta) & = \int p_Y(y)\,  \DKL{P_{\Theta|Y=y} }{P_\Theta} \dd y \notag\\
&  \le   \kappa \int p_Y(y) \sqrt{ \chi^2(P_{\Theta|Y=y}, P_{\Theta}) }  \,  \dd y. \label{eq:MI_var_bound_b} 
\end{align} 
The chi-squared distance can be related to the variance of the conditional density by noting that
\begin{align*}
 \chi^2(P_{\Theta|Y=y}, P_{\Theta}) & = \ex{ \left( \frac{p_{Y|\Theta}(y |\Theta)}{p_Y(y)} - 1 \right)^2 }\\
 & = \frac{ \var(p_{Y|\Theta}(y | \Theta) ) }{p^2_Y(y)},
\end{align*}
where the first equality follows from Bayes' rule. Plugging this identity back into \eqref{eq:MI_var_bound_b} completes the proof. 
\end{proof}


Our next result is a general inequality that allows us to bound the integral of the square root of a function in terms of certain moments.  The $p$-th moment of an integrable function $f: \reals^k \to \reals$ is defined according to
\begin{align*}
\mu_p(f) & = \int \|y\|^p f(y) \, \dd y.
\end{align*}

\begin{lemma}\label{lem:int_moment_inq}
For any non-negative integrable function $f : \reals^k \to \reals_+$  with $\mu_{k-1}(f), \mu_{k+1}(f) < \infty$,
\begin{align}
\int \sqrt{ f(y)  }\, \dd y & \le \sqrt{ \frac{  2 \pi^{\frac{k}{2} + 1}}{ \Gamma(\frac{k}{2} )}}  \big( \mu_{k-1}(f)\,  \mu_{k+1}(f) \big)^\frac{1}{4} , \notag 
\end{align}
where $\Gamma(z) = \int_0^\infty x^{z-1} e^{-x} \, \dd x$ is the Gamma function. 
\end{lemma}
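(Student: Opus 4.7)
The plan is to derive the inequality from a weighted Cauchy--Schwarz argument, choosing a weight function on $\reals^k$ whose radial profile is tailored to produce the two moments $\mu_{k-1}(f)$ and $\mu_{k+1}(f)$ after passing to spherical coordinates.

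Concretely, I would introduce the two-parameter weight $g_{a,b}(y) = a \|y\|^{k-1} + b \|y\|^{k+1}$ for positive constants $a,b$ and write $\sqrt{f} = \sqrt{f g_{a,b}} \cdot g_{a,b}^{-1/2}$. Cauchy--Schwarz then gives
\[
\int \sqrt{f(y)}\, \dd y \le \left( \int f(y)\, g_{a,b}(y)\, \dd y \right)^{1/2} \left( \int \frac{\dd y}{g_{a,b}(y)} \right)^{1/2},
\]
and by linearity the first factor equals $\sqrt{a\, \mu_{k-1}(f) + b\, \mu_{k+1}(f)}$.

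Next I would evaluate the second factor by switching to spherical coordinates. Writing $\omega_{k-1} = 2\pi^{k/2}/\Gamma(k/2)$ for the surface area of the unit sphere in $\reals^k$, the radial integral simplifies dramatically because the $r^{k-1}$ Jacobian cancels against the denominator:
\[
\int_{\reals^k} \frac{\dd y}{g_{a,b}(y)} = \omega_{k-1}\int_0^\infty \frac{\dd r}{a + b r^2} = \omega_{k-1}\cdot \frac{\pi}{2\sqrt{ab}},
\]
where the last equality is a routine substitution $r = \sqrt{a/b}\, s$ reducing to the classical arctangent integral.

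Finally I would optimize over $a,b$. By the AM--GM inequality, $a\mu_{k-1}(f) + b\mu_{k+1}(f) \ge 2\sqrt{ab\, \mu_{k-1}(f)\mu_{k+1}(f)}$, with equality when $a\mu_{k-1}(f) = b\mu_{k+1}(f)$, which simultaneously minimizes the product of the two Cauchy--Schwarz factors. Substituting the optimizer, the $\sqrt{ab}$ terms cancel and one obtains
\[
\int \sqrt{f(y)}\, \dd y \le \sqrt{\pi\,\omega_{k-1}}\,\bigl(\mu_{k-1}(f)\,\mu_{k+1}(f)\bigr)^{1/4},
\]
which is exactly the claimed bound after substituting $\omega_{k-1} = 2\pi^{k/2}/\Gamma(k/2)$. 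The main (small) obstacle is really just spotting the correct ansatz for $g_{a,b}$: the linear combination of two radial powers straddling $\|y\|^k$ is forced by the need for the spherical Jacobian to leave a scale-free integrable radial function, and any other exponents fail to produce an integrable $1/g$ on all of $\reals^k$.
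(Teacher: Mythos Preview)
Your proof is correct and essentially identical to the paper's: both apply Cauchy--Schwarz with the radial weight $\bigl(\lambda\|y\|^{k-1}+\lambda^{-1}\|y\|^{k+1}\bigr)^{-1}$ and evaluate the resulting integral in spherical coordinates; the only cosmetic difference is that the paper fixes the optimal $\lambda=\sqrt{\mu_{k+1}(f)/\mu_{k-1}(f)}$ at the outset, whereas you carry two parameters $a,b$ and optimize at the end via AM--GM.
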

\begin{proof}
Let $g(y) = \left( \lambda \|y\|^{k-1} + \lambda^{-1} \|y\|^{k+1} \right)^{-1}$ where $\lambda = \sqrt{\mu_{k+1}(f) / \mu_{k-1}(f)}$. Using the Cauchy-Schwarz inequality, we have
\begin{align}
\int \sqrt{ f(y)  }\, \dd y & = \int \sqrt{ g(y)}  \sqrt{\frac{ f(y)  }{g(y)} }\, \dd y \notag \\
& \le \sqrt{ \int g(y) \, \dd y} \sqrt{ \int \frac{f(y)}{g(y)}\,   \dd y}. \label{eq:lem:int_moment_inq_b}
\end{align}
Letting $\omega_k = \pi^\frac{k}{2}/ \Gamma( \frac{k}{2} + 1)$ denote the volume of the $k$-dimensional Euclidean ball, the first integral can be computed directly as
\begin{align}
\int g(y)\, \dd y 
&  =   \int_0^\infty  \frac{k\,  \omega_n}{ 1 + u^2}  \, \dd u  =k \,  \omega_k \, \frac{ \pi}{2}  \label{eq:lem:int_moment_inq_c},
\end{align}
where the first step follows from a transformation to polar coordinates.  Meanwhile, the second integral is given by
\begin{align}
 \int \frac{f(y)}{g(y)}  \dd y & = \lambda\,  \mu_{k-1}(f) + \lambda^{-1} \mu_{k+1}(f) \notag \\
 &  = 2 \sqrt{ \mu_{k-1}(f)\,  \mu_{k+1}(f)}. \label{eq:lem:int_moment_inq_d}
\end{align}
Plugging \eqref{eq:lem:int_moment_inq_c} and \eqref{eq:lem:int_moment_inq_d} back into \eqref{eq:lem:int_moment_inq_b}  leads to the stated inequality.
\end{proof}

To proceed we introduce the following definitions: 
\begin{align}
m_{p}(Y,\Theta) 
& = \frac{ \int \|y\|^p \var(p_{Y \mid \Theta}(y  \gmid \Theta)) \, \dd  y}{ \int \|y\|^p \phi^2(y) \, \dd y} ,\notag \\
M(Y,\Theta) & = \sqrt{ m_{k-1}(Y,\Theta)\, m_{k+1}(Y,\Theta)}, \notag
\end{align}
where $\phi(y) = (2\pi)^{-\frac{k}{2}} \exp( -\frac{1}{2} \|y\|^2)$ is the standard Gaussian density on $\reals^k$.  The next result follows from Lemma~\ref{lem:MI_var_bound} and Lemma~\ref{lem:int_moment_inq}.

\begin{lemma}\label{lem:MI_to_M_kp}  For any random pair $(Y,\Theta)$ with $M(Y,\Theta)  < \infty$, the mutual information satisfies 
\begin{align}
I(Y; \Theta) & \le  \kappa\,    \left(  \frac{  \pi k }{ 2}    \right)^\frac{1}{4} \sqrt{ M(Y,\Theta ) },  \notag 
\end{align}
where $\kappa$ is defined in \eqref{eq:kappa}. 
\end{lemma}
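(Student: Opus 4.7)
The plan is to chain Lemma~\ref{lem:MI_var_bound} and Lemma~\ref{lem:int_moment_inq} via the natural choice $f(y) = \var(p_{Y \mid \Theta}(y \gmid \Theta))$. Lemma~\ref{lem:MI_var_bound} gives $I(Y;\Theta) \le \kappa \int \sqrt{f(y)}\,\dd y$, and applying Lemma~\ref{lem:int_moment_inq} to this same $f$ yields
\[
I(Y;\Theta) \le \kappa \sqrt{\frac{2\pi^{k/2+1}}{\Gamma(k/2)}}\,\bigl(\mu_{k-1}(f)\mu_{k+1}(f)\bigr)^{1/4}.
\]
The hypothesis $M(Y,\Theta) < \infty$ is exactly what guarantees the moments of $f$ are finite, so everything is well defined.

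The second step is to rewrite the moments of $f$ in terms of $m_p(Y,\Theta)$. Directly from the definition, $\mu_p(f) = m_p(Y,\Theta)\cdot c_p$, where $c_p := \int \|y\|^p \phi^2(y)\,\dd y$, so that
\[
\bigl(\mu_{k-1}(f)\mu_{k+1}(f)\bigr)^{1/4} = \sqrt{M(Y,\Theta)}\,(c_{k-1}c_{k+1})^{1/4}.
\]
A short polar-coordinate computation produces $c_p = \Gamma((p+k)/2)/(2^k \pi^{k/2} \Gamma(k/2))$, so that $c_{k-1} c_{k+1} = \Gamma(k-1/2)\Gamma(k+1/2)/(4^k \pi^k \Gamma(k/2)^2)$.

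The final step is to collect absolute constants. Substituting everything, the prefactor of $\sqrt{M(Y,\Theta)}$ in front of $\kappa$ raised to the fourth power works out to $4\pi^2\,\Gamma(k-1/2)\Gamma(k+1/2)/(4^k\Gamma(k/2)^4)$, and the claim reduces to proving the Gamma-function inequality
\[
\frac{4\pi^2\,\Gamma(k-1/2)\Gamma(k+1/2)}{4^k\,\Gamma(k/2)^4} \le \frac{\pi k}{2}
\]
for every integer $k \ge 1$.

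I expect this last Gamma-function inequality to be the main technical obstacle. It is sharp at $k=1$ (direct computation gives equality to $\pi/2$ on both sides) and asymptotically sharp as $k \to \infty$ (by Stirling, both sides agree to leading order), so crude estimates cannot succeed. A clean route is to apply the Legendre duplication formula $\Gamma(z)\Gamma(z+1/2) = 2^{1-2z}\sqrt{\pi}\,\Gamma(2z)$ at $z = k-1/2$ and $z = k$ to re-express both sides in a common form involving $\Gamma(2k-1), \Gamma(2k), \Gamma(k)^2$, and then to verify monotonicity of the resulting ratio in $k$ via the recurrence $\Gamma(z+1) = z\Gamma(z)$; this reduces the verification to the base case $k=1$, where equality has already been checked. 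Everything else in the argument is routine.
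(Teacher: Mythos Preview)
Your overall strategy matches the paper's exactly: chain Lemma~\ref{lem:MI_var_bound} and Lemma~\ref{lem:int_moment_inq} with $f(y)=\var(p_{Y\mid\Theta}(y\mid\Theta))$, compute the normalizers $c_p=\int\|y\|^p\phi^2(y)\,\dd y$ explicitly, and reduce to a Gamma-function inequality. Your computations of $c_p$ and of the prefactor $A^4 = 4\pi^2\Gamma(k-\tfrac12)\Gamma(k+\tfrac12)/(4^k\Gamma(\tfrac{k}{2})^4)$ are correct, and this quantity is precisely $\rho(k)^2$ in the paper's notation.

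However, there is a genuine gap in your plan for the Gamma inequality. You propose to verify it by showing the ratio
\[
R(k)=\frac{A^4}{\pi k/2}=\frac{8\pi\,\Gamma(k-\tfrac12)\Gamma(k+\tfrac12)}{k\,4^k\,\Gamma(\tfrac{k}{2})^4}
\]
is non-increasing in $k$, then invoking $R(1)=1$. But $R$ is \emph{not} monotone: one computes $R(1)=1$, $R(2)=3\pi^2/32\approx 0.925$, $R(3)=15/16=0.9375$, and in fact $R(k)\to 1$ as $k\to\infty$. So a straight induction from the base case cannot work. Applying duplication at $z=k-\tfrac12$ and $z=k$ as you suggest still leaves $\Gamma(k/2)^4$ unhandled, and no rearrangement of the resulting expression yields a monotone ratio in $k$.

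The paper's route is different in exactly this step. After the Legendre duplication formula it rewrites
\[
\rho(k)=\sqrt{\tfrac{\pi k}{2}}\;\frac{\xi(k/2)}{\xi(k-\tfrac12)},\qquad \xi(z)=z^{-1/2}\,\frac{\Gamma(z+\tfrac12)}{\Gamma(z)},
\]
and then cites the known fact that $\xi$ is non-decreasing on $(0,\infty)$ (a Gautschi--Kershaw-type inequality; see the reference in the paper). Since $k/2\le k-\tfrac12$ for $k\ge 1$, the factor $\xi(k/2)/\xi(k-\tfrac12)$ is at most $1$, giving $\rho(k)\le\sqrt{\pi k/2}$, i.e.\ $R(k)\le 1$. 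This identity also explains why your monotonicity attempt had to fail: $R(k)=\bigl(\xi(k/2)/\xi(k-\tfrac12)\bigr)^2$ is a ratio of a non-decreasing function at two points whose gap $k-\tfrac12-\tfrac{k}{2}=\tfrac{k-1}{2}$ widens with $k$, so $R$ dips and then returns to $1$. The monotonicity that matters is that of $\xi$ itself, not of $R$.
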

\begin{proof}
The normalization term in the definition of $m_p(Y,\Theta)$ can be computed explicitly as $\mu_p(\phi) = (4 \pi)^{-\frac{k}{2}} \Gamma( \frac{k+p}{2} ) /\Gamma(\frac{k}{2})$. Combining Lemma~\ref{lem:MI_var_bound} and Lemma~\ref{lem:int_moment_inq} leads to
\begin{align*}
I(Y; \Theta) & \le  \kappa\,   \sqrt{ \rho(k) \, M(Y,\Theta) } ,  
\end{align*}
where $\rho(k) = \pi 2^{1-k} \left[\Gamma\left(k- \frac{1}{2}\right) \Gamma\left( k + \frac{ 1}{2}\right) \right]^\frac{1}{2} / \Gamma^2(\frac{k}{2})$. This is a slightly stronger version of the stated inequality. 
Using the Legendre duplication formula for the Gamma function \cite[Equation~(1.7)]{qi:2010}, the function $\rho(k)$ can be expressed as
\begin{align*}
\rho(k) = \sqrt{ \frac{ \pi k}{2}} \frac{ \xi\left(\frac{k}{2}\right)}{ \xi\left(k - \frac{1}{2}\right)},
\end{align*} 
with  $\xi(z) =z^{-\frac{1}{2}} \Gamma(z + \frac{1}{2} ) /  \Gamma(z)$.  The function $\xi(z)$ is non-decreasing on the positive reals \cite[Section~3.16]{qi:2010}, and thus we can conclude that $\rho(k) \le \sqrt{ \pi k / 2}$ for all $k \ge 1$.
\end{proof}

\begin{remark}
Lemma~\ref{lem:MI_to_M_kp}   holds generally  for any random pair $(Y, \Theta)$ such that the conditional distribution of $Y$ given $\Theta$ is absolutely continuous with respect to Lebesgue measure on $\reals^k$. 
\end{remark}
\subsection{Characterization of Moments}\label{sec:char_of_moments} 

The next step in our analysis is to characterize the moments of the variance of the conditional density.   Let  $X_1$ and $X_2$ be independent copies of $X$ and let the random tuple $(\Va, \Vg, R)$ be defined according to
\begin{align*}
\Va &= t+ \frac{ 1}{2 n} \|X_1\|^2 + \frac{1}{2n} \|X_2\|^2\\
\Vg &= \sqrt{ \left(t+ \frac{ 1}{ n} \|X_1\|^2\right)\left(t+ \frac{1}{n} \|X_2\|^2\right)} \\
R  & =  \frac{ 1}{n} \langle X_1, X_2 \rangle.
\end{align*}
The variables $\Va$ and $\Vg$ correspond to the arithmetic and geometric means respectively of $\{ t + \frac{1}{n} \|X_i\|^2\}_{i \in \{1,2\}}$, and thus $0 \le \Vg \le \Va$.  The variables $\Va$ and $R$ can be related to the the sum and the difference of $X_1$ and $X_2$ using the identities $\Va + R =t+  \frac{1}{2n} \|X_1 + X_2\|^2$ and $\Va - R = t+  \frac{1}{2n} \|X_1 - X_2\|^2$.

The next result gives an explicit characterization of $m_{p}(Y,\Theta)$ in terms of an expectation with respect to the tuple $(\Va, \Vg, R)$. 
\begin{lemma}\label{lem:M_kp_characterization} If $k + p > 0$ and  $\ex{\|X\|^\frac{p}{2} } < \infty$ then $m_{p}(Y, \Theta)$ is finite and is given by
\begin{align*}
\MoveEqLeft m_{p}(Y,\Theta)=\\
&\ex{   \left( \frac{1}{\Va  - R} \right)^\frac{k}{2} \bigg( \frac{ \Vg^2 - R^2}{ \Va- R} \bigg)^\frac{p}{2} -   \left( \frac{1}{\Va } \right)^\frac{k}{2} \bigg( \frac{ \Vg^2}{ \Va} \bigg)^\frac{p}{2}   }. 
\end{align*}
\end{lemma}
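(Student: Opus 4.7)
The plan is to start from the identity
\[
p^2_{Y\mid\Theta}(y\mid\Theta) = \ex{p_{Y\mid X,\Theta}(y\mid X_1,\Theta)\, p_{Y\mid X,\Theta}(y\mid X_2,\Theta) \gmid \Theta}
\]
recorded in the introduction, take outer expectation over $\Theta$, and combine it with the analogous factorization $p_Y^2(y) = \ex{p_{Y\mid X,\Theta}(y\mid X_1,\Theta_1)\, p_{Y\mid X,\Theta}(y\mid X_2,\Theta_2)}$ obtained by using two independent copies $\Theta_1,\Theta_2$ of the Gaussian matrix. The strategy is then to swap the order of integration via Tonelli so that the outer expectation is over the i.i.d.\ copies $(X_1,X_2)$ and the inner expectation is over the Gaussian matrix (or pair thereof) for fixed $(x_1,x_2)$.

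The key computation is to evaluate the inner expectation in closed form. Conditioned on $(X_1,X_2)=(x_1,x_2)$, the $k$ rows of $(\Theta X_1,\Theta X_2)$ are i.i.d.\ centered bivariate Gaussians whose covariance $\Sigma$ has diagonal entries $\|x_j\|^2/n$ and off-diagonal entry $\langle x_1,x_2\rangle/n$. The product $p_{Y\mid X,\Theta}(y\mid x_1,\Theta)\,p_{Y\mid X,\Theta}(y\mid x_2,\Theta)$ is therefore, coordinate by coordinate, the joint density of a bivariate Gaussian with covariance $\Sigma+tI_2$ evaluated at $(y_i,y_i)$. A direct calculation using $\det(\Sigma+tI_2) = \Vg^2 - R^2$ and the quadratic-form value $2(\Va - R)\,y_i^2/(\Vg^2-R^2)$ collapses each coordinate to a Gaussian-shaped factor in $y_i$ with scale $(\Vg^2 - R^2)/(2(\Va - R))$. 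Multiplying over the $k$ coordinates yields an explicit isotropic Gaussian form in $y$. For the $p_Y^2$ piece the off-diagonal covariance entry vanishes because $\Theta_1,\Theta_2$ are independent, and the same calculation yields the analogous form with $R$ effectively replaced by $0$, i.e.\ with scale $\Vg^2/(2\Va)$.

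To finish, I integrate $\|y\|^p$ against each of the two isotropic Gaussian forms using the standard identity $\int \|y\|^p (2\pi\sigma^2)^{-k/2} e^{-\|y\|^2/(2\sigma^2)}\,dy = \sigma^p\, \ex{\|Z\|^p}$ with $Z\sim\normal(0,I_k)$, and divide by the normalization $\int \|y\|^p \phi^2(y)\,dy = \ex{\|Z\|^p}\big/\bigl((4\pi)^{k/2} 2^{p/2}\bigr)$, computed in the same way. The Gaussian prefactors and the factor $\ex{\|Z\|^p}$ cancel cleanly, leaving exactly the two terms displayed in the lemma. Because the $p_Y^2$ contribution depends on $(X_1,X_2)$ only through $\Va$ and $\Vg$, it is immaterial whether it is averaged against two independent copies $(X_1,\Theta_1),(X_2,\Theta_2)$ or against the same $(X_1,X_2)$ used in the first term, so the two averages combine under a single expectation as written.

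The main obstacle is the bookkeeping of the Gaussian normalizing constants together with justifying the Tonelli swap. Finiteness is mild: since $\Va - R = t + \tfrac{1}{2n}\|X_1 - X_2\|^2 \ge t$ almost surely and $\Vg^2 - R^2 \le \Vg^2 \le \Va^2$, the Gaussian scale parameters lie in a bounded interval (controlled above by $\Va/2$ and below by a positive constant depending on $t$), so the $y$-integrals of $\|y\|^p$ against both forms are controlled by a polynomial in $\Va$. The moment assumption stated in the lemma then bounds the resulting $(X_1,X_2)$-expectation and legitimizes Tonelli applied separately to the two nonnegative integrands coming from $\ex{p^2_{Y\mid\Theta}(y\mid\Theta)}$ and $p_Y^2(y)$.
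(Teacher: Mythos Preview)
Your proposal is correct and follows essentially the same route as the paper's proof: express $\ex{p_{Y\mid\Theta}^2}$ and $p_Y^2$ via two copies of $X$, swap the $(X_1,X_2)$ and $\Theta$ expectations, compute the inner Gaussian integral in closed form, and then take $p$-th moments in $y$. The only cosmetic difference is that the paper packages the inner expectation as the moment-generating function of a noncentral Wishart matrix, whereas you factor it coordinate-by-coordinate as a bivariate Gaussian density on the diagonal; both yield the same isotropic Gaussian form in $y$ with scale $(\Vg^2-R^2)/(2(\Va-R))$, and the rest of the algebra is identical.
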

\begin{proof}
We begin by noting that the conditional density can be expressed as $p_{Y|\Theta}(y|\theta)  = \ex{ \phi_t(y - \theta X)}$, where $\phi_t(y) = (2 \pi t)^{-\frac{k}{2}} \exp( -\frac{1}{2t} \|y\|^2)$ is the Gaussian density on $\reals^k$ with mean zero and covariance $t I_k$. The key idea of the proof is to use the fact that the square of the conditional density can be expressed as
\begin{align}
p^2_{Y|\Theta}(y|\theta) & = \ex{ \phi_t(y - \theta X_1)  \phi_t(y - \theta X_2)},   \label{eq:squared_density_alt} 
\end{align}
where $X_1$ and $X_2$ are independent copies of $X$. Then, taking the expectation of both sides with respect to a random matrix $\Theta$, and then swapping the order of the expectation with respect to $\Theta$ and $(X_1,X_2)$ allows us to write
\begin{align*}
\ex{ p^2_{Y|\Theta}(y|\Theta)}  & = \ex{\nu(y, X_1, X_2)},
\end{align*}
where $\nu(y,x_1,x_2)  = \ex{  \phi_t(y - \Theta x_1)  \phi_t(y - \Theta x_2)}.$

The next step is to obtain a simplified expression for  $\nu(y,x_1,x_2)$. Observe that for any fixed pair $(x_1,x_2)$ and random matrix $\Theta$, the vectors $\Theta x_1$ and $\Theta x_2$ are jointly Gaussian with 
\begin{align*}
\begin{bmatrix} \Theta x_1 \\ \Theta x_2 \end{bmatrix}  \sim \normal(0, \Sigma ), \quad \Sigma= \frac{1}{n}  \begin{bmatrix} \|x_1\|^2 & \langle x_1, x_2 \rangle \\ \langle x_1, x_2 \rangle & \|x_2\|^2 \end{bmatrix}  \otimes I_k.
\end{align*} 
As a consequence, the expectation with respect to $\Theta$ in the definition of $\nu(y,x_1,x_2)$ can be expressed as a function of $\Sigma$ using
\begin{align*}
\nu(y,x_1,x_2) & =(2\pi t)^{- k}  \ex{ \exp\left( - \frac{1}{2 t} \left\|  \begin{bmatrix} y  \\ y \end{bmatrix} -\begin{bmatrix}  \Theta x_1 \\  \Theta x_2 \end{bmatrix}  \right\|^2   \right) }\\
& = (2\pi)^k  \left( \det (\Sigma + t I) \right)^{-\frac{1}{2}}\\
& \quad \times  \exp\left( - \frac{1}{2} \left\|( \Sigma + tI )^{-\frac{1}{2}}   \begin{bmatrix} y \\ y \end{bmatrix} \right\|^2    \right),
\end{align*}
where the second step follows from recognizing the expectation as the moment generating function of a noncentral Wishart matrix \cite[Theorem~3.5.3]{gupta:1999}. After some straightforward algebra, we see that
\begin{align*}
\det ( \Sigma  + tI )    &  = \left( v_g^2 - r^2\right)^k  \\
\frac{1}{2 } \left\|(  \Sigma + tI )^{-\frac{1}{2}}   \begin{bmatrix} y \\ y \end{bmatrix} \right\|^2 & =  \left( \frac{v_a - r}{v_g^2 - r^2}  \right)  \|y\|^2 ,
\end{align*}
where $(v_a,v_g,r)$ are defined in the same way as $(V_a, V_g, R)$.

Using results given above, the term inside the expectation in \eqref{eq:squared_density_alt}  can be expressed as
\begin{align*}
\nu(y, X_1, X_2 )  
& = (V_a - R)^{-\frac{k}{2}} U^\frac{k}{2}  \phi^2\left(U^{-\frac{1}{2}}  y  \right),
\end{align*}
where $U  = (\Vg^2 - R^2) / (V_a - R)$ and $\phi(y) = \phi_1(y)$. Using this representation, the $p$-th moment of $\nu(y, X_1, X_2 )$ with respect to $y$ can be expressed as
\begin{align*}
\MoveEqLeft  \int  \|y\|^p\nu(y, X_1, X_2 )  \, \dd y \\
& = (V_a - R)^{-\frac{k}{2}}  \int  \|y\|^pU^\frac{k}{2}  \phi^2\left(U^{-\frac{1}{2}}  y  \right) 
  \, \dd y \\
  & =  (V_a - R)^{-\frac{k}{2}} U^\frac{p}{2}   \int  \|z\|^p   \phi^2\left( z  \right)   \, \dd z,
\end{align*}
where the last step follows from the change of variables $z= U^{-\frac{1}{2}} y$. Taking the expectation of both sides and dividing by $ \mu_p(\phi^2) $ gives
\begin{align*}
\frac{ \mu_p\left(\ex{ p^2_{Y|\Theta}(y \gmid \Theta)}\right) }{ \mu_p(\phi^2)} = \ex{  \left(\frac{1}{V_a - R}\right)^\frac{k}{2} \left( \frac{\Vg^2 - R^2}{V_a - R} \right)^\frac{p}{2}}.
\end{align*}

In order to complete the proof, we also need to compute the $p$-th moment of $p_{Y}^2(y)$. We use the representation 
\begin{align*}
p_Y^2(y) = \ex{ \tilde{\nu}(y,x_1,x_2)},
\end{align*}
where $\tilde{\nu}(y) = \ex{  \phi_t(y - \Theta_1 x_1)  \phi_t(y - \Theta_2 x_2)}$  and $\Theta_1$ and $\Theta_2$  are independent copies of $\Theta$.  From here, we follow the same steps as before, with the main difference being that $\Theta_1 x_1$ and $\Theta_2 x_2$ are uncorrelated, that is
\begin{align*}
\begin{bmatrix} \Theta x_1 \\ \Theta x_2 \end{bmatrix}  \sim \normal(0, \widetilde{\Sigma }), \quad \widetilde{ \Sigma}= \frac{1}{n}  \begin{bmatrix} \|x_1\|^2 &  0  \\ 0  & \|x_2\|^2 \end{bmatrix}  \otimes I_k.
\end{align*} 
The resulting characterization of the $p$-th moment is given by
\begin{align}
\frac{ \mu_p\left(\ex{ p^2_{Y}(y )}\right) }{ \mu_p(\phi^2)} = \ex{  \left(\frac{1}{V_a }\right)^\frac{k}{2} \left( \frac{\Vg^2}{V_a } \right)^\frac{p}{2}}.
\end{align}
This completes the proof. 
\end{proof}

In some cases, the characterization of $m_{p}(Y,\Theta)$ given in Lemma~\ref{lem:M_kp_characterization} can be computed explicitly. 

\begin{example}[Orthogonal Support]
Suppose that $X$ is distributed on a set of $d\le n$ orthogonal vectors $\{x_1, \cdots, x_d\}$ with $\|x_i\|^2 = \gamma \, n$. Then $\Vg = \Va = t + \gamma$ and the distribution on $R$ is given by
\begin{align*}
R = \begin{cases}
\gamma, & \text{with probability $\lambda$}\\
0 , & \text{with probability $1-\lambda$} ,
\end{cases}
\end{align*}
where $\lambda = \sum_{i=1}^d P^2_{X}(\{x_i\}) = \Pr( X_1 = X_2)$.  By Lemma~\ref{lem:M_kp_characterization}, this means that
\begin{align*}
m_{p}(Y,\Theta)  & =\lambda \,  \left( \frac{ (t + 2 \gamma)^\frac{p}{2} }{ t^\frac{k}{2} } - \frac{(t+\gamma)^\frac{p}{2} }{(t+\gamma)^\frac{k}{2} }\right).
\end{align*}
Note that $\lambda \ge 1/d$, with equality when $X$ is distributed uniformly. 
\end{example}

\begin{example}[Uniform on Sphere]
Suppose that $X$ is uniform on the Euclidean sphere of radius $\sqrt{n\, \gamma}$. Then, it can be shown that
\begin{align*}
m_{p}(Y,\Theta)  & = \left(  \ex{ \frac{ (t + \gamma( 1 + U) )^\frac{p}{2} }{ \left( t + \gamma (1 - U)\right)^\frac{k}{2} }} - \frac{(t+\gamma)^\frac{p}{2} }{(t+\gamma)^\frac{k}{2} }\right),
\end{align*}
where $U$ is symmetric about zero with $U^2 \sim \text{Beta}(1, n-1)$. 
In this case, it is interesting to note that if $p$ is sufficiently small relative to $k$, then the function $M_{k,p}(X,t)$ is bounded uniformly with respect to $t$. 
\end{example}

The next results provide bounds on  $M(Y,\Theta)$ in terms of the functions $\alpha_r(X)$ and $\beta_r(X)$. The proofs of these results along with some further bounds are given in Appendix~\ref{sec:prop_M}. The first bound provides a general inequality for one-dimensional projections. The second bound applies to any distribution with bounded magnitudes. 

\begin{lemma}\label{lem:M_k1_bound} 
If $k=1$ and  $\ex{ \|X\|^2} < \infty$ then 
\begin{align*}
M(Y,\Theta) & \le \frac{\beta_1(X)}{ t} .
\end{align*}
\end{lemma}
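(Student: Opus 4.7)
The plan is to bound $m_0(Y,\Theta)$ and $m_2(Y,\Theta)$ separately using the explicit characterization of Lemma~\ref{lem:M_kp_characterization} at $k=1$, and then combine them via $M(Y,\Theta) = \sqrt{m_0 \cdot m_2}$. The constraints $V_a \pm R = t + \tfrac{1}{2n}\|X_1 \pm X_2\|^2 \ge t$ will be used repeatedly, and the identity $V_g^2 = V_a^2 - D^2/4$ with $D := (\|X_1\|^2 - \|X_2\|^2)/n$ yields the crude but useful bound $D^2 \le 4V_a^2$.

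For $m_0$ the argument is routine: rationalizing gives
\[
\frac{1}{\sqrt{V_a - R}} - \frac{1}{\sqrt{V_a}} = \frac{R}{\sqrt{V_a(V_a - R)}\,(\sqrt{V_a} + \sqrt{V_a - R})},
\]
and the denominator is at least $t \cdot 2\sqrt{t} = 2 t^{3/2}$, so taking absolute values and expectations yields $m_0 \le \beta_1(X)/(2 t^{3/2})$.

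The main work is a pointwise bound on the $m_2$ integrand $h := (V_g^2 - R^2)/(V_a - R)^{3/2} - V_g^2/V_a^{3/2}$. Using $V_g^2 = V_a^2 - D^2/4$, decompose $h = A + B$ with $A := (V_a + R)/\sqrt{V_a - R} - \sqrt{V_a}$ and $B := (D^2/4)[V_a^{-3/2} - (V_a - R)^{-3/2}]$. Rearranging $A$ via $\sqrt{V_a} - \sqrt{V_a - R} = R/(\sqrt{V_a}+\sqrt{V_a - R})$ gives the identity $A = R(2\sqrt{V_a} + \sqrt{V_a - R})/[\sqrt{V_a - R}(\sqrt{V_a}+\sqrt{V_a - R})]$, showing $A$ has the sign of $R$ with $|A| \le 2|R|/\sqrt{V_a - R} \le 2|R|/\sqrt{t}$; meanwhile $B$ has the opposite sign of $R$. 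Hence when $R \ge 0$, $h \le A \le 2|R|/\sqrt{t}$. When $R < 0$, discard $A$ and bound $h \le B$: the mean-value estimate $(V_a - R)^{3/2} - V_a^{3/2} \le \tfrac{3}{2}|R|\sqrt{V_a - R}$, together with $D^2 \le 4V_a^2$ and the observation that $V_a - R \ge V_a \ge t$, gives $B \le 3 D^2 |R|/(8 V_a^{3/2}(V_a - R)) \le 3|R|/(2\sqrt{t})$. Thus $h \le 2|R|/\sqrt{t}$ pointwise, and $m_2 \le 2\beta_1(X)/\sqrt{t}$.

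Combining, $M(Y,\Theta) = \sqrt{m_0 \cdot m_2} \le \sqrt{(\beta_1(X)/(2t^{3/2}))(2\beta_1(X)/\sqrt{t})} = \beta_1(X)/t$. The principal subtlety is handling the ``anisotropy'' term $B$ (proportional to $V_a^2 - V_g^2 = D^2/4$), which can have either sign; when $R < 0$ the clean bound on $A$ is no longer useful and one must rely on the crude estimate $D^2 \le 4 V_a^2$, which is just tight enough because simultaneously $V_a - R \ge V_a$ eliminates the extra $V_a$-dependence.
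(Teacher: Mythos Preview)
Your proof is correct and reaches the same intermediate bounds as the paper, $m_0 \le \beta_1(X)/(2t^{3/2})$ and $m_2 \le 2\beta_1(X)/\sqrt{t}$, but the route to the $m_2$ bound is genuinely different. The paper first invokes a general inequality (derived for all $k,p\ge 0$) that replaces $V_g$ by $V_a$ and $R$ by $|R|$, reducing the integrand to $g_{1,2}(|R|/V_a)\sqrt{V_a} = (V_a+|R|)/\sqrt{V_a-|R|}-\sqrt{V_a}$; it then shows by a calculus argument that this expression is non-increasing in the scale parameter $s = V_a - t$ on $[|R|,\infty)$ and evaluates at $s=|R|$. You instead keep the exact integrand, use the identity $V_g^2 = V_a^2 - D^2/4$ to split it as $A+B$ with $A$ the ``$V_g=V_a$'' contribution and $B$ the anisotropy correction, and exploit that $A$ and $B$ carry opposite signs so that only one term survives in each of the cases $R\gtrless 0$. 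The paper's approach has the advantage of fitting into the same $g_{k,p}$ framework that also yields the multivariate bound (Lemma~\ref{lem:M_kq_bound}); your approach is more self-contained for the $k=1$ case, avoiding both the preliminary $V_g\le V_a$ reduction and the monotonicity-in-$s$ step, at the cost of a case split and the crude estimate $D^2\le 4V_a^2$, which---as you note---is just sharp enough because $V_a - R \ge V_a$ when $R<0$.
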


\begin{lemma}\label{lem:M_kq_bound}
If $\Pmin \le \frac{1}{n} \|X\|^2 \le \Pmax$ almost surely, then 
\begin{align*}
 M(Y,\Theta) &\le \left( \frac{2 \Pmax}{\Pmin} \right)^\frac{1}{4} \\
 & \quad \times  \left[ k\,  \frac{ \beta_1(X)}{\Pmin}  + \left(1 + \frac{ 2 \Pmax}{t} \right)^\frac{k}{2} \frac{\beta_2^2(X)}{ \Pmin^2}  \right].
\end{align*}
Furthermore, if $\frac{1}{n} \|X\|^2 = \gamma$ almost surely, then 
\begin{align*}
 M(Y,\Theta) & \le 2^\frac{1}{4}    \left[k\,   \frac{\frac{1}{n} \ex{ \|X\|^2}}{\gamma}  +\left(1 + \frac{ 2 \gamma}{t} \right)^\frac{k}{2}  \frac{\beta_2^2(X)}{ \gamma^2}  \right].
\end{align*}
\end{lemma}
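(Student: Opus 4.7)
My plan is to bound $m_{k-1}(Y,\Theta)$ and $m_{k+1}(Y,\Theta)$ separately via the closed-form characterization in Lemma~\ref{lem:M_kp_characterization}, and then combine them through $M(Y,\Theta) = \sqrt{m_{k-1}\, m_{k+1}}$. Writing $f(A,G,r) = (G^2 - r^2)^{p/2}(A-r)^{-(k+p)/2}$, the lemma gives $m_p = \ex{f(\Va, \Vg, R) - f(\Va, \Vg, 0)}$. The first step is the upper bound $\Vg^2 - R^2 \le \Va^2 - R^2 = (\Va - R)(\Va + R)$, which strips out the $\Vg$-dependence and yields $f(\Va, \Vg, R) \le g(\Va, R) := (\Va + R)^{p/2}(\Va - R)^{-k/2}$. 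Adding and subtracting $g(\Va, 0) = \Va^{(p-k)/2}$ splits $m_p$ into an ``$R$-deviation'' piece $\ex{g(\Va, R) - g(\Va, 0)}$ and a symmetric-mean discrepancy $\ex{\Va^{-(k+p)/2}(\Va^p - \Vg^p)}$. The discrepancy vanishes identically in the case $\tfrac{1}{n}\|X\|^2 \equiv \gamma$ (yielding the cleaner second inequality) and is otherwise controlled using $\Va - \Vg = (\Va^2 - \Vg^2)/(\Va + \Vg) \le (\Pmax - \Pmin)^2/(8\Pmin)$ under the bounded-magnitude hypothesis.

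The main work is the $R$-deviation piece. Substituting $u = R/\Va$ (with $|u| \le u^* := \Pmax/(t + \Pmax)$ by Cauchy--Schwarz) and $g(\Va, R) = \Va^{(p-k)/2}\,\tau(u)$ where $\tau(u) = (1+u)^{p/2}(1-u)^{-k/2}$, one has $\tau(0) = 1$ and $\tau'(0) = (k+p)/2$. Taylor's theorem gives
\begin{align*}
\tau(u) - 1 = \frac{k+p}{2}\, u + \int_0^u (u - s)\,\tau''(s)\, \dd s,
\end{align*}
so that on taking expectations, the linear contribution is at most $\tfrac{k+p}{2}\,\beta_1(X)/(t+\Pmin)^{(k-p)/2+1}$ (using $\ex{|R|} = \beta_1(X)$ and $\Va \ge t+\Pmin$), and the quadratic remainder is at most $\tfrac{1}{2}\sup_{|s|\le u^*}|\tau''(s)|\,\beta_2^2(X)/(t+\Pmin)^{(k-p)/2+2}$. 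Computing $\tau''(s) = \tau(s)[\phi(s)^2 + \phi'(s)]$ with $\phi(s) = (p/2)/(1+s) + (k/2)/(1-s)$, together with the estimate $\tau(u^*) \le \sqrt{2}\,(1 + 2\Pmax/t)^{k/2}$, then extracts the exponential factor appearing in the stated bound.

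Combining the estimates for $p = k-1$ and $p = k+1$ via $M = \sqrt{m_{k-1}\, m_{k+1}}$ merges the denominators $(t+\Pmin)^{3/2}$ and $(t+\Pmin)^{1/2}$ geometrically, producing $(t+\Pmin)^{-1}$ together with an asymmetric factor $\sqrt{(t+\Pmax)/(t+\Pmin)}$ that is responsible for the $(2\Pmax/\Pmin)^{1/4}$ prefactor. The hardest step will be the quadratic-remainder estimate: naively bounding $|\tau''|$ introduces a spurious $k^2$ factor coming from $\phi^2 \sim k^2((t+\Pmax)/t)^2$, which is absent from the claimed inequality. Avoiding this requires a sharper estimate that bundles the $k$-dependence of $\phi^2$ into the exponential $(1 + 2\Pmax/t)^{k/2}$---for instance, by integrating $\tau'(s)$ directly rather than differentiating a second time, and exploiting that $\tau(s) \le \tau(u^*)$ already carries the required $k$-th power so that further $k$-factors are not needed.
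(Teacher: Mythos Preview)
Your broad strategy matches the paper's: pass from Lemma~\ref{lem:M_kp_characterization} to the one-variable function $g_{k,p}(u)=(1-u)^{-k/2}(1+u)^{p/2}-1$, extract a linear-plus-quadratic bound in $u=R/\Va$, and then combine $p=k\pm 1$ geometrically. Two points, however, separate your plan from a complete proof.

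\textbf{The discrepancy term is an artifact of your decomposition.} By bounding $f(\Va,\Vg,R)$ first and only then subtracting $f(\Va,\Vg,0)$, you create the extra positive term $\ex{\Va^{-(k+p)/2}(\Va^p-\Vg^p)}$. This term depends on $\|X_1\|^2-\|X_2\|^2$, not on $\langle X_1,X_2\rangle$, so it is not controlled by $\beta_1$ or $\beta_2$; for, say, $X$ uniform on two spheres of different radii it stays bounded away from zero as $n\to\infty$ while $\beta_2\sim n^{-1/2}$. Hence it cannot be absorbed into the stated bound. The paper avoids this entirely: factor $\Va^{-(k+p)/2}\Vg^{p}$ (not $\Va^{(p-k)/2}$) out of \emph{both} terms in Lemma~\ref{lem:M_kp_characterization}, so the difference becomes
\[
\Bigl[(1-R/\Va)^{-(k+p)/2}(1-R^2/\Vg^2)^{p/2}-1\Bigr]\Va^{-(k+p)/2}\Vg^{p},
\]
and now replacing $\Vg$ by $\Va$ \emph{inside the bracket only} gives $g_{k,p}(R/\Va)\,\Va^{-(k+p)/2}\Vg^{p}$ directly, with no leftover.

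\textbf{The quadratic remainder is the real gap.} You correctly diagnose that bounding $\sup|\tau''|$ yields a spurious $k^2$, but ``integrating $\tau'$ directly'' is not a concrete fix: writing $\tau'(s)=\tau(s)\phi(s)$ still leaves a factor $\phi(s)\sim k/(1-s)$ that produces one power of $k$ per integration. The paper's device is different and avoids derivatives altogether: set
\[
h_{k,p}(u)\;=\;\frac{g_{k,p}(u)-\tfrac{k+p}{2}\,u}{u^2},
\]
and observe (this is the step requiring verification) that $h_{k,p}$ is \emph{nonnegative and nondecreasing} on $(-1,1)$. Then for $|u|\le z$ one has $h_{k,p}(u)\le h_{k,p}(z)\le z^{-2}(1-z)^{-k/2}(1+z)^{p/2}$, since the subtracted affine part is nonnegative at $z>0$. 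Taking $z=S/(t+S)$ with $S=\Va-t$ gives exactly the factor $(1+2\Pmax/t)^{k/2}$ with no additional $k$-dependence. This monotonicity of the second-order divided difference is the missing idea in your outline.
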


\subsection{Further Results and Proof of Theorem~\ref{thm:DKL_general}  } \label{sec:proofs_of_thm_DKL}

Using the results given in the previous section, we are now ready to give bounds on the relative entropy in terms of the parameters $\alpha_r(X)$ and $\beta_r(X)$. We begin with some special cases. The next result corresponds to the case of a one-dimensional projection. 

\begin{theorem}\label{thm:DKL_univariate} Consider Assumptions 1 and 2.  If $k=1$, then the relative entropy satisfies
\begin{align*}
 \ex{ \DKL{P_{Y \mid \Theta}}{G_Y} } & \le \frac{ \alpha_1(X)}{ 2 t}  +   \sqrt{ \frac{ \beta_1(X)}{t} } .
\end{align*}
\end{theorem}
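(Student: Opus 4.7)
The plan is to combine the decomposition \eqref{eq:EDKL_alt} with the two machinery pieces developed earlier in Section~\ref{sec:KL_bounds}: the marginal relative-entropy bound in \eqref{eq:DPY_to_alpha}, and the variance-based mutual information bound from Lemma~\ref{lem:MI_to_M_kp} specialized via Lemma~\ref{lem:M_k1_bound}. Write
\[
\ex{\DKL{P_{Y|\Theta}}{G_Y}} \;=\; \DKL{P_Y}{G_Y} \;+\; I(Y;\Theta),
\]
and bound the two terms separately, matching the two terms $\alpha_1(X)/(2t)$ and $\sqrt{\beta_1(X)/t}$ on the right-hand side.

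For the first term, the bound \eqref{eq:DPY_to_alpha} with $k=1$ gives
\[
\DKL{P_Y}{G_Y} \;\le\; \tfrac{1}{2}\log\!\left(1+\frac{\gamma}{t}\right)\frac{\alpha_1(X)}{\gamma}.
\]
Applying the elementary inequality $\log(1+x)\le x$ with $x=\gamma/t$ converts $\log(1+\gamma/t)/\gamma$ into a clean $1/t$, which produces $\alpha_1(X)/(2t)$. This is the easy step.

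For the mutual information term, Lemma~\ref{lem:MI_to_M_kp} with $k=1$ gives
\[
I(Y;\Theta) \;\le\; \kappa\left(\frac{\pi}{2}\right)^{\!\!1/4}\sqrt{M(Y,\Theta)},
\]
and Lemma~\ref{lem:M_k1_bound} upper bounds $M(Y,\Theta)$ by $\beta_1(X)/t$. Combining these produces a bound of the form $\kappa(\pi/2)^{1/4}\sqrt{\beta_1(X)/t}$. The only subtle point is checking that the constant in front is at most $1$: numerically $\kappa\approx 0.80474$ and $(\pi/2)^{1/4}\approx 1.0586$, and their product is $\approx 0.852 < 1$. This lets the prefactor be absorbed into the clean $\sqrt{\beta_1(X)/t}$ stated in the theorem.

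I don't anticipate a genuine obstacle: the content of the theorem is essentially the statement that in the univariate case the two tools already assembled (the $\alpha_1$-based bound on the marginal relative entropy and the $\beta_1$-based bound on $M(Y,\Theta)$ for $k=1$) combine with the Topsøe decomposition to yield the stated estimate, and the only care required is (i)~using $\log(1+x)\le x$ to trade $\log(1+\gamma/t)/\gamma$ for $1/t$, and (ii)~the small numerical verification that $\kappa(\pi/2)^{1/4}\le 1$ so the prefactor on the mutual-information term is at most one. No further moment computations or transportation arguments are needed, since the chi-squared-to-mutual-information reduction and the moment characterization of Lemma~\ref{lem:M_kp_characterization} have already been carried out in earlier lemmas.
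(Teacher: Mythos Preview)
Your approach is correct and matches the paper's proof exactly: use the decomposition \eqref{eq:EDKL_alt}, bound $\DKL{P_Y}{G_Y}$ via \eqref{eq:DPY_to_alpha} together with $\log(1+x)\le x$, and bound $I(Y;\Theta)$ via Lemma~\ref{lem:MI_to_M_kp} combined with Lemma~\ref{lem:M_k1_bound}. One small numerical slip: $(\pi/2)^{1/4}\approx 1.1195$ rather than $1.0586$, so $\kappa(\pi/2)^{1/4}\approx 0.901$, but this is still below $1$ and the argument goes through unchanged.
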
 
\begin{proof}
From Lemma~\ref{lem:MI_to_M_kp} and Lemma~\ref{lem:M_k1_bound} we see that  $I(Y;\Theta) \le \sqrt{ \beta_1(X)/t}$. Combining this inequality with \eqref{eq:EDKL_alt} and \eqref{eq:DPY_to_alpha} leads to the stated result.  
\end{proof}

\begin{theorem}\label{thm:DKL_sphere} Consider Assumption 1. If $\frac{1}{n} \|X\|^2 = \gamma$ almost surely, then
\begin{align*}
\MoveEqLeft \ex{ \DKL{P_{Y \mid \Theta}}{G_Y} }  \\
 & \le   k^\frac{3}{4} \left( \frac{   \frac{1}{n} \|\ex{X}\|^2 }{\gamma } \right)^\frac{1}{2}  + k^\frac{1}{4} \left(1 + \frac{ 2 \gamma }{t} \right)^\frac{k}{4} \frac{ \beta_2(X)}{\gamma} .
 \end{align*}
\end{theorem}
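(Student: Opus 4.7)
The plan is to follow the two-step template used for Theorem~\ref{thm:DKL_general}: decompose the expected relative entropy via identity~\eqref{eq:EDKL_alt} into $\DKL{P_Y}{G_Y} + I(Y;\Theta)$, handle the two terms separately, and combine. The simplification peculiar to the constant-magnitude setting is that the marginal term vanishes outright, leaving only the mutual information to control.

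First I would note that the hypothesis $\tfrac{1}{n}\|X\|^2 = \gamma$ a.s.\ implies $\alpha(X) = 0$, so inequality~\eqref{eq:DPY_to_alpha} forces $\DKL{P_Y}{G_Y} = 0$. The same conclusion is transparent probabilistically: since $\Theta$ is i.i.d.\ Gaussian, the conditional law $P_{Y\mid X}$ equals $\normal(0,(t+\gamma)I_k)$ for every realization of $X$, and averaging over $X$ returns exactly $G_Y$. Hence $\ex{\DKL{P_{Y\mid\Theta}}{G_Y}} = I(Y;\Theta)$, and the problem reduces to bounding the mutual information.

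To bound $I(Y;\Theta)$, I would apply Lemma~\ref{lem:MI_to_M_kp} to obtain $I(Y;\Theta) \le \kappa(\pi k/2)^{1/4}\sqrt{M(Y,\Theta)}$, and then invoke the constant-magnitude part of Lemma~\ref{lem:M_kq_bound}, which supplies an estimate of the shape $M(Y,\Theta) \le 2^{1/4}[k\,A + B^2]$ with $A = \|\ex{X}\|^2/(n\gamma)$ and $B = (1+2\gamma/t)^{k/4}\beta_2(X)/\gamma$. Splitting the square root via $\sqrt{kA + B^2} \le \sqrt{kA} + B$, the first summand picks up $k^{1/4}\cdot k^{1/2} = k^{3/4}$ and the second picks up $k^{1/4}$, producing exactly the two terms in the theorem statement. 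A short numerical check shows that the combined prefactor $\kappa\cdot 2^{1/8}\cdot(\pi/2)^{1/4}$ is strictly less than one, which explains why no explicit constant $C$ appears.

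The main obstacle is conceptual rather than computational: one must recognize that the constant-magnitude assumption simultaneously kills the marginal-Gaussian deviation (since the conditional covariance of $Y$ given $X$ no longer depends on $X$) and unlocks the sharper moment estimate in which the crude quantity $\beta_1(X)$ is replaced by the potentially much smaller $\|\ex{X}\|^2/n$. Once these two points are in place, the remaining work is a routine combination of earlier inequalities, with only a little care needed to track numerical constants closely enough to retire them entirely.
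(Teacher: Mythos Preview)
Your proposal is correct and follows essentially the same route as the paper: use the decomposition~\eqref{eq:EDKL_alt}, observe that $\DKL{P_Y}{G_Y}=0$ under the constant-magnitude hypothesis, bound $I(Y;\Theta)$ via Lemma~\ref{lem:MI_to_M_kp} and the sphere case of Lemma~\ref{lem:M_kq_bound}, and check that the accumulated numerical constant $\kappa\,(\pi/2)^{1/4}2^{1/8}$ falls below one. The only additional content you supply is the explicit splitting $\sqrt{kA+B^2}\le k^{1/2}\sqrt{A}+B$ and the direct probabilistic explanation for $P_Y=G_Y$, both of which are routine.
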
 
\begin{proof}
The mutual information $I(Y;\Theta)$ can be upper bounded using  Lemma~\ref{lem:MI_to_M_kp} and Lemma~\ref{lem:M_kq_bound}, and noting that the  constant  $\kappa \pi^\frac{1}{4}2^\frac{1}{8}$ is less than one. Combining this bound with \eqref{eq:EDKL_alt} and noting that $\DKL{P_Y }{G_Y} = 0$ completes the proof.   
\end{proof}

It is interesting to note that the first term in Theorem~\ref{thm:DKL_sphere} is the norm of the expected value of $X$, and thus this term is equal to zero whenever $X$ has zero mean. 

At this point, the difficulty in bounding the mutual information for large $k$ and general distributions on $X$ arises from the fact that the behavior of the moments $m_p(Y,\Theta)$ can be dominated by the tail behavior of $\|X\|$. In particular, the requirement of higher order moments for $X$ is highly restrictive. The next result provides a conditioning argument that allows us to bypass this issue. 

\begin{lemma}\label{lem:DKL_decomp} For every measurable subset $\set \subseteq \reals^n$, the mutual information satisfies 
\begin{align}
 I(Y ; \Theta) &\le  \frac{k}{2} \log\left(1 + \frac{\gamma}{t} \right)  \left( P_X(\set^c) + \frac{\alpha_1(X)}{\gamma} \right) \notag \\
& \quad +  I(Y ; \Theta \gmid X \in \set) P_X(\set),  \label{eq:DKL_decomp}
\end{align}
where $\set^c = \reals^n \backslash \set$. 
\end{lemma}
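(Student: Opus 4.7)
The plan is to introduce the indicator $E = \one_\set(X)$ and to exploit that $\Theta$ and $E$ are independent (since $\Theta \perp X$ by Assumption~1 and $E$ is a function of $X$). Applying the chain rule in two ways to $I(Y; E, \Theta)$ gives
\[
I(Y;\Theta\gmid E) - I(Y;\Theta) = I(E;\Theta\gmid Y) - I(E;\Theta) = I(E;\Theta\gmid Y) \ge 0,
\]
because $I(E;\Theta) = 0$. Hence
\[
I(Y;\Theta) \le I(Y;\Theta\gmid E) = P_X(\set)\, I(Y;\Theta\gmid X\in\set) + P_X(\set^c)\, I(Y;\Theta\gmid X\notin\set),
\]
and since the first summand already matches the desired bound, the entire argument reduces to controlling the second one.

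Next, I would observe that conditioning on $\{X\notin\set\}$ preserves $\Theta\perp X$, so the same chain-rule manipulation inside the conditional world yields $I(Y;\Theta\gmid X\notin\set) \le I(Y;\Theta\gmid X, X\notin\set) = \ex{I(Y;\Theta\gmid X=x)\gmid X\notin\set}$. The per-$x$ quantity admits a closed-form Gaussian evaluation: given $X=x$, the vector $Y$ is $\normal(0,(t+\|x\|^2/n)I_k)$ and depends on $\Theta$ only through $W = \Theta x \sim \normal(0,(\|x\|^2/n)I_k)$, so $I(Y;\Theta\gmid X=x) = \frac{k}{2}\log(1+\|x\|^2/(nt))$. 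Multiplying through by $P_X(\set^c)$ gives
\[
P_X(\set^c)\, I(Y;\Theta\gmid X\notin\set) \le \frac{k}{2}\,\ex{\log\!\left(1+\frac{\|X\|^2}{nt}\right)\one_{\set^c}(X)}.
\]

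The remaining step is algebraic. I would split $\log(1+\|X\|^2/(nt)) = \log(1+\gamma/t) + \log\frac{t+\|X\|^2/n}{t+\gamma}$; the first piece contributes exactly $\log(1+\gamma/t)\,P_X(\set^c)$. To the second piece apply $\log(1+y) \le y$ with $y = (\|X\|^2/n - \gamma)/(t+\gamma)$, then drop the indicator and take absolute values to obtain the bound $\alpha_1(X)/(t+\gamma)$. Finally, the elementary inequality $x/(1+x) \le \log(1+x)$ applied at $x=\gamma/t$ gives $1/(t+\gamma) \le (1/\gamma)\log(1+\gamma/t)$, which combines everything into the stated bound $\frac{k}{2}\log(1+\gamma/t)(P_X(\set^c)+\alpha_1(X)/\gamma)$.

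The main subtlety is structural rather than computational: one must resist the naive chain-rule bound $I(Y;\Theta) \le I(Y;E)+I(Y;\Theta\gmid E)$, which leaves behind an extraneous $I(Y;E)$ term with no natural bound in these quantities. The sharper inequality $I(Y;\Theta)\le I(Y;\Theta\gmid E)$ enabled by $\Theta\perp E$ is what makes the argument close cleanly, with the $\alpha_1(X)/\gamma$ slack capturing precisely the possible gap between the restricted second moment $\ex{\|X\|^2\one_{\set^c}(X)}/n$ and the natural surrogate $\gamma\,P_X(\set^c)$.
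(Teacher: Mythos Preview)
Your proof is correct and follows essentially the same approach as the paper: introduce the indicator $E=\one_\set(X)$, use the chain rule together with $I(E;\Theta)=0$ to get $I(Y;\Theta)\le I(Y;\Theta\gmid E)$, condition further on $X$ on the complement, evaluate the resulting Gaussian mutual information, and split the logarithm. The only cosmetic difference is in the final algebraic step, where the paper invokes Lemma~\ref{lem:log_dev_inq} while you use $\log(1+y)\le y$ together with $x/(1+x)\le\log(1+x)$ directly; these amount to the same manipulation.
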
 
\begin{proof}
Let $U = \one_\set(X)$ be an indicator of the event $\{ X \in \set\}$. By the chain rule for mutual information, we have
\begin{align*}
I(Y,U; \Theta) & = I(Y; \Theta ) + I(U; \Theta \gmid Y) \\
& = I(Y; \Theta \gmid U)  + I(U; \Theta) .
\end{align*}
The term $I(U;\Theta)$ is equal to zero because $U$ and $\Theta$ are independent, and  rearranging terms leads to
\begin{align}
I(Y;\Theta) 
 & \le   I(Y ; \Theta \gmid X \notin \set) P_{X}(\set^c) \notag \\
 & \quad    +   I(Y ; \Theta \gmid X \in \set) P_{X}(\set).  \label{eq:DKL_decomp_c}
\end{align}
The mutual information  in the first term on the right-hand side can be further bounded using
\begin{align}
I(Y ; \Theta \gmid X \notin \set)   & \le I(Y ; \Theta \gmid X, X \notin \set) \notag \\
& =\frac{k}{2}  \ex{  \log\left( \frac{ t + \frac{1}{n} \|X\|^2 }{ t  } \right) \; \middle \vert \; X \notin \set},  \label{eq:DKL_decomp_d}
\end{align}
where the inequality follows from the same steps that let to \eqref{eq:DKL_decomp_c}, and the second step follows from the fact that the conditional distribution of $Y$ given $(\Theta, X)$ is Gaussian with mean $\Theta X$ and covariance $t I_k$.  If we multiply this term by the probability $P_X(\set^c)$, we then have
\begin{align}
\MoveEqLeft[1.5] \ex{  \log\left( \frac{ t + \frac{1}{n} \|X\|^2 }{ t  } \right) \; \middle \vert \; X \notin \set} P_X(\set^c) \notag \\
& = \ex{  \log\left( \frac{ t + \frac{1}{n} \|X\|^2 }{ t +\gamma  } \right)  \one_{\set^c}(X)}+ \log\left(1 + \frac{\gamma}{t} \right)P_X(\set^c) \notag \\
& \le  \log\left( 1 + \frac{\gamma}{t} \right)  \left( \frac{ \alpha_1(X)}{\gamma}  + P_X(\set^c)\right),  \label{eq:DKL_decomp_e}
\end{align} 
where the inequality follows from Lemma~\ref{lem:log_dev_inq}. Combining \eqref{eq:DKL_decomp_c}, \eqref{eq:DKL_decomp_d}, and \eqref{eq:DKL_decomp_e} completes the proof. 
\end{proof}

We are now ready to prove Theorem~\ref{thm:DKL_general}. Given $\eps \in (0,1]$, let $\set$ be defined according to
\begin{align}
\set = \left\{ x \in \reals^n \, : \,  \left| \frac{1}{n} \|x\|^2  -\gamma \right|  \le \frac{ \eps}{2}  \gamma \right\}.
\end{align}
By Markov's inequality, the probability $P_X(\cE^c)$ can be upper bounded in terms of the deviation of the squared magnitude of $X$ about its expectation: 
\begin{align}
P_X(\cE^c) \le \frac{2}{\eps \gamma} \ex{ \left|  \frac{1}{n} \|X\|^2 - \gamma \right| } =  \frac{2}{ \eps} \frac{\alpha_1(X)}{ \gamma}. \label{eq:DKL_general_c}
\end{align}

Next, let $X'$ denote a vector that is drawn according to the conditional distribution of $X$ given $X \in \set$ and let $Y' = \Theta X' + \sqrt{t} N$ denote the corresponding measurements. By construction, the magnitude of $X'$ is bounded almost surely:
\begin{align*}
\tfrac{1}{2} \gamma \le  \left(1 -  \tfrac{\eps}{2}\right) \gamma  \le  \frac{1}{n} \|X'\|^2 \le \left(1  + \tfrac{\eps}{2}\right) \gamma \le \tfrac{3}{2} \gamma  .
\end{align*}
Therefore, by Lemma~\ref{lem:M_kq_bound},  we have
\begin{align*}
M(Y',\Theta)& \le \left( \frac{2 (1  +\eps)  }{(1-\eps) } \right)^\frac{1}{4}  \times  \\
& \quad   \left[ k  \frac{  \beta_1(X')}{(1-\eps) \gamma }  +  \left(1 + \frac{ 2(1+\eps) \gamma }{t} \right)^\frac{k}{2} \frac{\beta_2^2(X')}{ (1-\eps)^2 \gamma^2 } \right]\\
& \le2^\frac{9}{4}  3^\frac{1}{4}   \left[  k  \frac{ \beta_1(X')}{\gamma}  + \left(1 + \frac{ (2+\eps) \gamma }{t} \right)^\frac{k}{2} \frac{\beta_2^2(X')}{  \gamma^2 }  \right].
\end{align*}
The function $\beta_r(X')$ can be related to $\beta_r(X)$ by noting that
\begin{align*}
\beta_r^r(X') P^2_X(\set)  & = \frac{1}{n^r} \ex{ \left| \langle X_1 , X_2 \rangle \right|^r \one_\set(X_1) \one_\set(X_2)}\\
& =  \beta_r^r(X \one_\set(X))\\
& \le  \beta_r^r(X).
\end{align*}
Combining these inequalities with Lemma~\ref{lem:MI_to_M_kp} and Lemma~\ref{lem:M_k1_bound} leads to
\begin{align}
\MoveEqLeft I(Y ; \Theta \gmid X \in \set) P_X(\set) \le 2^\frac{9}{4}  3^\frac{1}{4}  \notag \\
& \quad \times  \left[  k  \frac{ \beta_1(X)}{\gamma}  + \left(1 + \frac{ (2+\eps) \gamma }{t} \right)^\frac{k}{2} \frac{\beta_2^2(X)}{  \gamma^2 }  \right]^\frac{1}{2} . \label{eq:DKL_general_d}
\end{align}
Finally, the proof of Theorem~\ref{thm:DKL_general} is completed by combining \eqref{eq:EDKL_alt}, \eqref{eq:DPY_to_alpha},  and Lemma~\ref{lem:DKL_decomp} with \eqref{eq:DKL_general_c} and \eqref{eq:DKL_general_d}.

\section{Bounds on Wasserstein Distance} \label{sec:W2_bounds}

This section provides bounds with respect to the expected squared Wasserstein distance of order two. Our first result  follows from Talagrand's transportation inequality \cite{talagrand:1996} and shows that the Wasserstein distance can be upper bounded in terms of the relative entropy between the distribution $P_{Y \mid \Theta}$ and $G_Y$ defined in Section~\ref{sec:KL_bounds}. 

\begin{lemma}\label{lem:W2_to_KL}The Wasserstein distance satisfies the following inequality for every realization of the matrix $\Theta$
\begin{align*}
W_2^2(P_{Z \mid \Theta}, G_Z) & \le 4 t k  +  4 (t+ \gamma) \DKL{P_{Y \mid \Theta}}{G_Y}. 
\end{align*}
\end{lemma}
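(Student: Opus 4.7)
The plan is to reduce the Wasserstein bound on $P_{Z\mid\Theta}$ to a Wasserstein bound on $P_{Y\mid\Theta}$, to which Talagrand's transportation inequality applies because the target $G_Y$ is an isotropic Gaussian. Throughout, $\Theta$ is regarded as a fixed realization. The first ingredient is Talagrand's $T_2$ inequality \cite{talagrand:1996} applied to $G_Y=\normal(0,(t+\gamma)I_k)$, which yields
\[
W_2^2(P_{Y\mid\Theta},G_Y)\le 2(t+\gamma)\,\DKL{P_{Y\mid\Theta}}{G_Y}.
\]

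The second step, which is the main technical device, is to construct a particular coupling of $(Z,\tilde Z)\sim(P_{Z\mid\Theta},G_Z)$ that is routed through an auxiliary coupling $(Z,V)$ with $V\sim G_Y$. Pick any coupling $(Z,V)\sim(P_{Z\mid\Theta},G_Y)$, draw $W\sim\normal(0,I_k)$ independent of $(Z,V)$, and set
\[
\tilde Z=\tfrac{\gamma}{\gamma+t}\,V+\sqrt{\tfrac{\gamma t}{\gamma+t}}\,W.
\]
This is the standard Gaussian posterior-mean-plus-independent-noise decomposition of $G_Z$ conditional on $G_Y$; because $V\sim\normal(0,(\gamma+t)I_k)$ and $W$ is independent, one checks directly that marginally $\tilde Z\sim\normal(0,\gamma I_k)=G_Z$. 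Expanding $\bEx\|Z-\tilde Z\|^2$, using $\bEx\|Z\|^2=\gamma k$, $\bEx\|V\|^2=(\gamma+t)k$, and the independence of $W$ from $(Z,V)$, a short computation collapses to the clean identity
\[
\bEx\|Z-\tilde Z\|^2=\tfrac{\gamma}{\gamma+t}\bigl(\bEx\|Z-V\|^2+tk\bigr).
\]

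The third step is to take the infimum over couplings of $(Z,V)$, giving $W_2^2(P_{Z\mid\Theta},G_Z)\le\tfrac{\gamma}{\gamma+t}(W_2^2(P_{Z\mid\Theta},G_Y)+tk)$, and then to bound $W_2(P_{Z\mid\Theta},G_Y)$ by two standard inequalities. The smoothing coupling $(Z,Z+\sqrt t N)$ yields $W_2^2(P_{Z\mid\Theta},P_{Y\mid\Theta})\le tk$, so by the triangle inequality and the Talagrand bound above, $W_2(P_{Z\mid\Theta},G_Y)\le\sqrt{tk}+\sqrt{2(t+\gamma)\DKL{P_{Y\mid\Theta}}{G_Y}}$; squaring with $(a+b)^2\le 2a^2+2b^2$ gives $W_2^2(P_{Z\mid\Theta},G_Y)\le 2tk+4(t+\gamma)\DKL{P_{Y\mid\Theta}}{G_Y}$. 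Combining these ingredients and using $\gamma/(\gamma+t)\le 1$ together with $3tk\le 4tk$ delivers the stated inequality.

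The main obstacle is sharpness of the constants. A naive three-point triangle argument $P_{Z\mid\Theta}\to P_{Y\mid\Theta}\to G_Y\to G_Z$ inflates the $tk$ coefficient to $6$ or $8$, which is why the Gaussian posterior coupling in Step 2 is essential: its prefactor $\gamma/(\gamma+t)$ absorbs one copy of the smoothing error and keeps the $tk$ constant at $4$. Everything else is a routine combination of Talagrand, the trivial smoothing coupling, and the Gaussian closed-form conditional distribution.
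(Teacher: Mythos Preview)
Your coupling construction in Step~2 is elegant, but the claimed identity
\[
\bEx\|Z-\tilde Z\|^2=\tfrac{\gamma}{\gamma+t}\bigl(\bEx\|Z-V\|^2+tk\bigr)
\]
rests on $\bEx\|Z\|^2=\gamma k$, and this is false when $\Theta$ is held fixed. For a given realization of $\Theta$ one has $\bEx\|Z\|^2=\gtr\bigl(\Theta\,\bEx[XX^T]\,\Theta^T\bigr)$, which depends on $\Theta$ and equals $\gamma k$ only after averaging over the Gaussian ensemble (or on the Stiefel manifold, or when $\ex{XX^T}=\gamma I_n$ and $\|\Theta\|_F^2=k$). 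If you redo the expansion without that assumption you obtain instead
\[
\bEx\|Z-\tilde Z\|^2=\tfrac{t}{\gamma+t}\,\bEx\|Z\|^2+\tfrac{\gamma}{\gamma+t}\,\bEx\|Z-V\|^2,
\]
and the first term cannot be controlled by $tk$ uniformly in $\Theta$. Since the lemma is asserted for every realization of $\Theta$, this is a genuine gap; your argument would only yield the bound in expectation over $\Theta$, which is weaker (though adequate for the downstream theorems).

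The paper's proof is exactly the ``naive'' three-point triangle argument you dismiss: $P_{Z\mid\Theta}\to P_{Y\mid\Theta}\to G_Y\to G_Z$, with each smoothing leg bounded by $\sqrt{tk}$, followed by $(a+b)^2\le 2a^2+2b^2$ and Talagrand on the middle leg. The virtue of this route is that it never touches the second moment of $Z$, so it holds pointwise in $\Theta$. You are right that this path naturally produces a $tk$ coefficient larger than~$4$; the precise constant is immaterial for the applications in the paper, but your attempt to sharpen it via the posterior coupling does not go through at the pointwise level.
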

\begin{proof}
Two applications of the triangle inequality yields:
\begin{align*}
W_2(P_{Z \mid \Theta} , G_Z) & \le W_2(P_{Z \mid \Theta} , P_{Y \mid \Theta})  + W_2( P_{Y \mid \Theta}, G_Y) \\
& \quad  + W_2(G_Y , G_Z),
\end{align*}
where  $P_{Y \mid \Theta} = P_{Z \mid \Theta} \ast \normal(0, tI_k)$ and $G_{Y} = G_{Z } \ast \normal(0, tI_k)$. 
By the subadditivity of Wasserstein distance under convolution  \cite[Proposition 7.17]{villani:2003}, it follows that both $W_2(P_{Z \mid \Theta} , P_{Y \mid \Theta})$ and $W_2(G_Y , G_Z)$ are upper bounded by $\sqrt{ t k}$. Combining these bounds with the inequality $(a+b)^2 \le 2 a^2 + 2b^2$ leads to
\begin{align}
W_2^2(P_{Z \mid \Theta} , G_Z) & \le 4 tk   +2 W_2^2(P_{Y \mid \Theta}, G_Y).   \label{eq:W2_to_KL_b}
\end{align}

Talagrand's transportation inequality \cite{talagrand:1996} gives $W^2_2(Q , G_Y)  \le  2 k  ( t + \gamma)  \DKL{Q}{G_Y}$ for any distribution $Q$ that is absolutely continuous with respect to  $G_Y$.  Applying this inequality to \eqref{eq:W2_to_KL_b} with $Q = P_{Y \mid \Theta}$ leads to the stated result. 
\end{proof}

\subsection{Bound for Distributions on the Sphere} 
Combining Lemma~\ref{lem:W2_to_KL} with the bounds on the relative entropy in Section~\ref{sec:KL_bounds} leads to bounds on the expected Wasserstein distance in terms of $\alpha_r(X)$ and $\beta_r(X)$. The next result leverages Theorem~\ref{thm:DKL_sphere} to give a bound for the setting where $X$ has constant magnitude. 

\begin{theorem}\label{thm:W2_sphere}Consider Assumption 1.  If $\frac{1}{n} \|X\|^2 = \gamma$ almost surely then
\begin{align*}
\MoveEqLeft \ex{ W_2^2(P_{Z \mid \Theta}, G_Z) } \\
& \le  C \,  \gamma\, \,   k^\frac{3}{4} \left( \frac{\frac{1}{n} \|\ex{X}\|^2}{\gamma } \right)^\frac{1}{2}   +  C \, \gamma\,   k\,  \left(  \frac{ \beta_2(X)}{\gamma} \right)^\frac{4}{k +4} ,
\end{align*}
where $C$ is a universal constant. In particular, the inequality holds with $C = 10$.
\end{theorem}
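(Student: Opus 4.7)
\textbf{Proof plan for Theorem~\ref{thm:W2_sphere}.} The plan is to combine Lemma~\ref{lem:W2_to_KL} with the relative entropy bound of Theorem~\ref{thm:DKL_sphere}, then optimize the free noise parameter $t$. Taking expectations in Lemma~\ref{lem:W2_to_KL} and inserting the bound from Theorem~\ref{thm:DKL_sphere} yields, for every $t>0$,
\begin{align*}
\ex{W_2^2(P_{Z\mid\Theta},G_Z)}
&\le 4tk + 4(t+\gamma)k^{3/4}\bigl(\tfrac{1}{n}\|\ex{X}\|^2/\gamma\bigr)^{1/2}\\
&\quad + 4(t+\gamma)k^{1/4}\bigl(1+\tfrac{2\gamma}{t}\bigr)^{k/4}\tfrac{\beta_2(X)}{\gamma}.
\end{align*}
The first two terms in the right-hand side are already of the desired form provided $t\le C\gamma$, so the real work is to choose $t$ that keeps the third term small without letting $4tk$ blow up.

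Next I would set $a=\beta_2(X)/\gamma\in[1/\sqrt n,1]$ and look for $t\in(0,\gamma]$ so that the $4tk$ contribution balances the tail factor $(1+2\gamma/t)^{k/4}a$. Using $(1+2\gamma/t)^{k/4}\le (3\gamma/t)^{k/4}$ when $t\le\gamma$, the third term is dominated by a constant multiple of $\gamma\,k^{1/4}\,2^{k/4}(\gamma/t)^{k/4}a$. Minimizing $4tk + C_k \gamma(\gamma/t)^{k/4}a$ in $t$, with $C_k:=C\,k^{1/4}\,2^{k/4}$, gives the critical point
\[
t^{\star}=\bigl(C_k/16\bigr)^{4/(k+4)}\gamma\,a^{4/(k+4)},
\]
at which both summands reduce to a constant times $\gamma\,k\,a^{4/(k+4)}$. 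The key observation that makes the constants universal (i.e.\ independent of $k$) is that each of the factors $\bigl(C_k/16\bigr)^{4/(k+4)}$ and $C_k^{\,4/(k+4)}=C^{4/(k+4)}k^{1/(k+4)}2^{k/(k+4)}$ is bounded above by an absolute constant for all $k\ge1$; so $t^{\star}\le C\gamma$, which also bounds the $k^{3/4}$ term by $C\gamma k^{3/4}(\tfrac{1}{n}\|\ex{X}\|^2/\gamma)^{1/2}$.

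The main obstacle, and the reason the proof cannot just pick $t$ proportional to $\gamma$, is the factor $(1+2\gamma/t)^{k/4}$ in Theorem~\ref{thm:DKL_sphere}, which is exponentially large in $k$ unless $t$ is well chosen. The delicate point is to verify that the joint optimization really yields the exponent $4/(k+4)$ claimed in the theorem (rather than, say, $4/k$), and to check that all the $k$-dependent constants arising from the optimization collapse to a universal number after raising to the power $4/(k+4)$. Once these bookkeeping verifications are done, substituting $t^{\star}$ into the displayed inequality yields $\ex{W_2^2}\le C\gamma k^{3/4}(\tfrac1n\|\ex X\|^2/\gamma)^{1/2}+C\gamma k\,(\beta_2(X)/\gamma)^{4/(k+4)}$, and tracking the numerical constants (the $4$'s from Lemma~\ref{lem:W2_to_KL}, the implicit constants in Theorem~\ref{thm:DKL_sphere}, and the $16^{k/(k+4)}$ from the optimization) through the calculation gives the stated value $C=10$.
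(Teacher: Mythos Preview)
Your approach is the same as the paper's---combine Lemma~\ref{lem:W2_to_KL} with Theorem~\ref{thm:DKL_sphere} and optimize over $t$---but there is one missing ingredient that makes the optimization go through.

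You simplify $(1+2\gamma/t)^{k/4}\le(3\gamma/t)^{k/4}$ under the restriction $t\le\gamma$, and then assert that the optimizer satisfies $t^{\star}\le C\gamma$ for an absolute $C$. That is not enough: for your simplified bound to be valid you need $t^{\star}\le\gamma$, not merely $t^{\star}\le C\gamma$. With $C_k\asymp 3^{k/4}k^{1/4}$ (you wrote $2^{k/4}$, but the $3$ comes from your own estimate $(3\gamma/t)^{k/4}$), the condition $t^{\star}\le\gamma$ is equivalent to $C_k a\le 16$, i.e.\ $a=\beta_2(X)/\gamma\lesssim 3^{-k/4}$. For moderate $k$ this forces $a$ to be essentially zero. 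When $a$ is larger than that threshold, the constrained minimum is at the boundary $t=\gamma$, and the third term then contributes $\asymp\gamma\,k^{1/4}\,3^{k/4}a$, which is exponentially large in $k$ and destroys the bound.

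The paper closes this gap with the trivial convexity estimate
\[
\ex{W_2^2(P_{Z\mid\Theta},G_Z)}\le\ex{W_2^2(P_{Z\mid\Theta,X},G_Z)}=\ex{\|\Theta X\|^2}+k\gamma=2k\gamma.
\]
This makes a cleaned-up inequality of the form $\ex{W_2^2}\le 4kt+\text{const}\cdot\gamma k^{3/4}(\cdot)^{1/2}+\text{const}\cdot\gamma k^{1/4}\gamma^{k/4}a\,t^{-k/4}$ valid for \emph{all} $t>0$: for $t\ge\gamma/2$ the first term alone exceeds $2k\gamma$, so the inequality holds by the trivial bound, while for small $t$ it follows from Theorem~\ref{thm:DKL_sphere}. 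One can then optimize over $t$ without any constraint, and the minimizer lands exactly where you computed, yielding the exponent $4/(k+4)$ and a $k$-independent constant. Add that one line and your argument is complete.
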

\begin{proof}
By the convexity of Wasserstein distance, we obtain the simple upper bound
\begin{align}
\ex{W_2^2(P_{Z|\Theta}, G_Z)} & \le \ex{W_2^2(P_{Z|\Theta, X}, G_Z)}  \notag \\
& =  \ex{ \| \Theta X\|^2} + k \gamma   \notag \\
& = 2k \gamma. \label{eq:W_sphere_c}
\end{align} 
Alternatively, for $t > 0$, combining Lemma~\ref{lem:W2_to_KL} and Theorem~\ref{thm:DKL_sphere} yields
\begin{align}
 \MoveEqLeft \ex{ W_2^2(P_{Z \mid \Theta}, G_Z) }    \le 4 kt  + 4 (t + \gamma)k^\frac{3}{4} \left( \frac{\frac{1}{n} \|\ex{X}\|^2}{\gamma} \right)^\frac{1}{2} \notag  \\
& \quad   +  4 (t + \gamma)  k^\frac{1}{4} \left(1 + \frac{ 2 \gamma }{t} \right)^\frac{k}{4} \frac{ \beta_2(X)}{\gamma}.\label{eq:W_sphere_d}
\end{align}
Combining the above inequalities leads to
\begin{align*}
\ex{ W_2^2(P_{Z \mid \Theta}, G_Z) }   &  \le 4 kt  + 6  \gamma  k^\frac{3}{4} \left( \frac{\frac{1}{n} \|\ex{X}\|^2}{\gamma}\right)^\frac{1}{2}  \\
& \quad  +  6   \gamma     k^\frac{1}{4} \left(\frac{3}{2} \gamma \right)^\frac{k}{4} \frac{ \beta_2(X)}{\gamma} t^{-\frac{k}{4} }. 
\end{align*}
This inequality holds for $t \ge \gamma/2$ because of \eqref{eq:W_sphere_c} and for $0 < t \le \gamma/2$ because of \eqref{eq:W_sphere_d}. Note that only the first and third terms on the right-hand side depend on $t$. Evaluating this expression with 
\begin{align*}
t^* &  =  \frac{3\gamma }{2}    \left( \frac{ k^\frac{1}{4}   }{4 }  \, \frac{ \beta_2(X)}{\gamma}  \right)^\frac{4}{k  +4 } ,
\end{align*}
leads to
\begin{align*}
4 kt^* +  6   \gamma     k^\frac{1}{4} \left(\frac{3}{2} \gamma \right)^\frac{k}{4} \frac{ \beta_2(X)}{\gamma} (t^*)^{-\frac{k}{4} }
&  = c_k \, k \, \gamma \, \left( \frac{\beta_2(X)}{\gamma} \right)^\frac{4}{k+4} ,
\end{align*}
where
\begin{align*}
c_k & = 6 \left( 1  + \frac{4}{k}    \right)    \left(\frac{ k^\frac{1}{4}   }{ 4 }   \right)^\frac{4}{k  +4 } .
\end{align*}
Finally, it is easy to check that $c_k < 10$ for all $k \ge 1$. 
\end{proof}

\subsection{Proof of Theorem~\ref{thm:W2_general}}

To obtain bounds for general distributions on $X$, one possible approach is combine Lemma~\ref{lem:W2_to_KL}  with Theorem~\ref{thm:DKL_general},  following the same steps used in the proof of Theorem~\ref{thm:W2_sphere}. However, one issue that arises in this approach is that the minimization with respect to $t$ depends on both $\alpha_1(X)$ and $\beta_2(X)$. To bypass this issue, we use a conditioning argument that allows us to apply Theorem~\ref{thm:W2_sphere} to a  projection of $X$ onto the Euclidean sphere.

Given any random vector $X$ that is not deterministically zero and a measurable subset $\set$ of $\reals^n$ that does not include the origin, the vector $X_\set$ is defined according to 
\begin{align}
X_\set = \begin{dcases}
\frac{  \sqrt{  \ex{ \|X\|^2}}   }{\|X\|} X , & X \in \set\\
0, & X \notin \set
\end{dcases} \label{eq:Xprime}.
\end{align}
The next result bounds the expected Wasserstein distance in terms of the conditional distribution of $X_\set$ given $X \in \set$.  

\begin{lemma}\label{lem:truncation_u}
For every measurable set $\set \subseteq \reals^n \backslash \{0\}$ with $P_X(\set) > 0$, the Wasserstein distance satisfies
\begin{align*}
\ex{ W_2^2(P_{Z \mid \Theta}, G_Z) }
& \le  2   k\, \alpha_1(X)  + 2 k \gamma P_X(\set^c) \\
& \quad  + 2  \ex{ W_2^2( P_{Z_\set \mid \Theta, X \in \set}, G_{Z}) } P_X(\set), 
\end{align*}
where $Z_\set = \Theta X_\set $ and $X_\set$ is given by \eqref{eq:Xprime}.
\end{lemma}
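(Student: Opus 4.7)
The plan is to construct an explicit coupling $(Z, G)$ between $P_{Z \mid \Theta}$ and $G_Z$ that is tailored to whether $X$ lies in $\set$, and then to read off the bound from $W_2^2(P_{Z \mid \Theta}, G_Z) \le \ex{\|Z - G\|^2 \mid \Theta}$. I will draw $X \sim P_X$ and set $Z = \Theta X$; on the event $\{X \in \set\}$, I will sample $G$ from the conditionally optimal $W_2$ coupling of $P_{Z_\set \mid \Theta, X \in \set}$ with $G_Z$ given $(\Theta, Z_\set)$, and on $\{X \notin \set\}$, I will sample $G \sim G_Z$ independently of $(X, \Theta)$. Since the $G$-marginal is $G_Z$ in both regimes, this yields a valid coupling of $P_{Z \mid \Theta}$ and $G_Z$.

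The next step is to split the distance on $\{X \in \set\}$ through the intermediate point $Z_\set = \Theta X_\set$. The elementary inequality $(a + b)^2 \le 2 a^2 + 2 b^2$ gives $\|Z - G\|^2 \le 2 \|\Theta(X - X_\set)\|^2 + 2 \|Z_\set - G\|^2$, and integrating the second term against the conditionally optimal coupling contributes exactly $2 P_X(\set) \ex{W_2^2(P_{Z_\set \mid \Theta, X \in \set}, G_Z)}$ to the overall bound. On $\{X \notin \set\}$, the independence and mean-zero property of $G$ annihilate the cross term, leaving $\ex{\|Z - G\|^2 \one_{\set^c}(X)} = \ex{\|Z\|^2 \one_{\set^c}(X)} + k \gamma P_X(\set^c)$. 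Applying the identity $\ex{\|\Theta Y\|^2} = (k/n) \ex{\|Y\|^2}$ for any $Y$ independent of $\Theta$, the remaining work is to control $\ex{\|X - X_\set\|^2 \one_\set(X)}$ and $\ex{\|X\|^2 \one_{\set^c}(X)}$.

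For these two moments I will use elementary magnitude bounds. The identity $|a - b|(a + b) = |a^2 - b^2|$ for $a, b \ge 0$ yields $(\|X\| - \sqrt{n \gamma})^2 \le |\|X\|^2 - n \gamma|$, and hence $\ex{\|X - X_\set\|^2 \one_\set(X)} \le \ex{|\|X\|^2 - n \gamma| \one_\set(X)}$. Similarly $\|X\|^2 \le |\|X\|^2 - n \gamma| + n \gamma$ gives $\ex{\|X\|^2 \one_{\set^c}(X)} \le \ex{|\|X\|^2 - n \gamma| \one_{\set^c}(X)} + n \gamma P_X(\set^c)$. The asymmetric combination of the factor $2$ from the triangle inequality on the first piece with the factor $1$ on the second is still dominated by $2 \ex{|\|X\|^2 - n \gamma|} = 2 n \alpha_1(X)$, and the two separate $k \gamma P_X(\set^c)$ contributions (one from $\ex{\|G\|^2}$ on $\set^c$, the other from the rescaling to $n \gamma$) add to give precisely the $2 k \gamma P_X(\set^c)$ term stated in the lemma.

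The only mildly delicate point is arranging the conditionally optimal coupling $(Z_\set, G)$ as a jointly measurable family indexed by $\Theta$, but this is standard disintegration on the Polish space $\reals^k$. Beyond that, the proof is just careful accounting of the constants, and the asymmetric factor of $2$ explained above is what makes the prefactors in the statement match.
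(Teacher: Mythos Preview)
Your proof is correct and follows essentially the same route as the paper. The paper phrases the splitting as convexity of $W_2^2$ plus the triangle inequality for $W_2$ and then bounds each of the three resulting $W_2$ terms by an explicit coupling, whereas you assemble those same three couplings into a single global coupling from the outset; the substantive ingredients --- the independent coupling on $\set^c$, the natural $(X, X_\set)$ coupling on $\set$, the optimal $(Z_\set, G)$ coupling, and the inequality $(\sqrt{x} - \sqrt{y})^2 \le |x - y|$ --- are identical in both arguments, as is the final accounting of constants.
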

\begin{proof}

We begin by focusing on inequalities that hold pointwise with respect to the matrix $\Theta $. We define the distributions
\begin{align*}
P_1  = P_{Z \mid \Theta , X \in \set^c},  \quad P_2  = P_{Z \mid \Theta , X \in \set}, \quad P_3  = P_{Z_\set \mid \Theta , X \in \set},
\end{align*} 
and note that $P_{Z|\Theta} = (1-\lambda) P_1 + \lambda P_2$ where $\lambda = P_X(\set)$. Using this notation, we can now write
\begin{align}
 W^2(P_{Z \mid \Theta} , G_Z)  &  \le (1-\lambda)  W^2(P_1 , G_Z)    +  \lambda W^2(P_2 , G_Z)  \notag \\
 & \le  (1-\lambda)W^2(P_1 , G_Z)    +  2\lambda  W^2(P_2 , P_3) \notag \\
 & \quad  + 2 \lambda W^2(P_3, G_Z), \label{eq:truncation_u_b}
 \end{align}
where the first step follows from the convexity of $W_2^2(P,Q)$ and the second step follows from the triangle inequality for $W_2(P,Q)$ combined with the fact that  $(a+b)^2 \le 2a^2 + 2b^2$. 

To upper bound the first term in \eqref{eq:truncation_u_b}, let $Z^* \sim G_Y$ be independent of $Z$.  From the definition of the Wasserstein distance, we can write
\begin{align*}
W^2(P_1 , G_Z)   &   \le \ex{ \| Z - Z^* \|^2 \;  \middle | \;   \Theta , X \in \set^c} \\
& =  \ex{ \|\Theta X \|^2\,  \middle | \,  \Theta, X \in \set^c}  + k \gamma,
\end{align*}
where the second step follows because $Z^*$ has mean zero and $\ex{ \|Z^*\|^2} = k \gamma$.

To upper bound the second term in \eqref{eq:truncation_u_b}  let  $(Z,Z_\set)$ be defined according to $Z = \Theta X$ and $Z_\set = \Theta X_\set$ where the relationship between $X$ and $X_\set$ is given by \eqref{eq:Xprime}. Conditioned on $\Theta$ and the event $X \in \set$, we have $Z \sim P_2$ and $Z_\set \sim P_3$, and thus
\begin{align*}
 W^2(P_2 , P_3)  
& \le \ex{  \| Z -  Z_\set     \|^2 \; \middle \vert \;  \Theta , X \in \set }\\
& = \ex{  \| \Theta X_\set \|^2 \left( \sqrt{S}  -   \sqrt{\gamma }  \right)^2 \;  \middle \vert \;  \Theta, X \in \set }\\
& \le \ex{  \left \| \Theta X_\set    \right\|^2  | S- \gamma | \,  \middle  | \, \Theta,  X \in \set },
\end{align*}
where $S = \frac{1}{n} \|X\|^2$ and the second inequality follows from noting that
\[
(\sqrt{x} -\sqrt{y})^2 = \frac{(x- y)  (\sqrt {x }- \sqrt{y})}{ \sqrt{x}  +\sqrt{y}} \le |x-y|,
\]
for all $x \ge 0 $ and $y > 0$. 

Collecting the terms back together and taking  the expectation with respect to  $\Theta$ leads to 
\begin{align*}
 \ex{  W^2(P_{Z \mid \Theta} , G_Z) } 
& \le \ex{ \|\Theta X\|^2   \one_{ \set^c}(X)  } + k  \gamma P_X(\set^c) \\
&  +\ex{  \left \| \Theta X_\set     \right\|^2  | S- \gamma | \one_\set(X)   }\\
& \quad +  2 \ex{ W^2(P_{Z_\set | \Theta, X \in \set} , G_Z) } P_X(\set).
 \end{align*}
We can further simplify this bound using the fact that $\|\Theta X\|^2$ can be decomposed as $U  S$ where $U$ is a chi-squared random variable with $k$ degrees of freedom that is independent of $S$. Consequently, 
\begin{align*}
\MoveEqLeft  \ex{ \|\Theta X\|^2   \one_{ \set^c}(X)  }  + \ex{  \left \| \Theta X_\set    \right\|^2  | S- \gamma | \one_\set(X)   }  \\
& = k \ex{ S  \one_{ \set^c}(X)  } + 2 k \ex{ | S- \gamma |  \one_{\set}(X)  }\\
& = k  \ex{ (S- \gamma)   \one_{ \set^c}(X)  } +   k \gamma P_X(\set^c)\\
& \quad + 2 k \ex{ | S- \gamma |  \one_{\set}(X)  }\\
& \le  2 \ex{ |S - \gamma| } + \gamma P_X(\set^c).
\end{align*} 
This completes the proof of Lemma~\ref{lem:truncation_u}.
\end{proof}

We are now ready to prove Theorem~\ref{thm:W2_general}. Let $\set$ be defined according to
\begin{align*}
\set = \left\{ x \in \reals^n \, : \,  \left| \frac{1}{n} \|x\|^2  -\gamma \right|  \le \frac{1}{2}  \gamma \right\}.
\end{align*}
By Markov's inequality,  we see that $P_X(\set^c) \le 2 \alpha_1(X)/\gamma$. 

Next, let $X'$ denote a vector that  is drawn according to the conditional distribution of $X_\set$ given $X \in \set$. By construction, the magnitude satisfies $\|X'\| = \sqrt{n \gamma}$ almost surely and thus, by Theorem~\ref{thm:W2_sphere}, 
\begin{align}
\MoveEqLeft \ex{ W_2^2(P_{Z_\set \mid \Theta , X \in \set }, G_Z) } \notag \\
& \le  C \,  \gamma\, \,   k^\frac{3}{4} \frac{\sqrt{\frac{1}{n} \|\ex{X'}\|^2 }}{\sqrt{\gamma} }   +  C \, \gamma\,   k\,  \left(  \frac{ \beta_2(X')}{\gamma} \right)^\frac{4}{k +4} . \label{eq:W2_general_d} 
\end{align}
The magnitude of the expectation of $X'$ obeys
\begin{align*}
\| \ex{ X'}\| \le \beta_1(X'),
\end{align*}
and the function $\beta_r(X')$ can be related to $\beta_r(X)$ by noting that
\begin{align*}
\MoveEqLeft  \beta_r^r(X')  P_X^2(\set) \\
& =   \ex{ \left| \frac{ \gamma  \langle X_1,X_2 \rangle}{ \|X_1\| \, \|X_2\|}  \right|^r  \; \middle \vert \; X_1 \in \set, X_2 \in \set   } P_X^2(X) \\
& \le   \ex{ \left|  \frac{2}{n}   \langle X_1,X_2 \rangle  \right|^r  \; \middle \vert \; X_1 \in \set, X_2 \in \set   } P_X^2(X) \\
& =  \ex{ \left| \frac{2}{n} \langle X_1,X_2 \rangle  \right|^r  \one_\set( X_1) \one_\set(  X_2)  }  \\
& \le  2^r  \beta_r^r(X). 
\end{align*}
Multiplying both sides of \eqref{eq:W2_general_d} by $P_X(\set)$ and then applying these inequalities gives
\begin{align}
\MoveEqLeft \ex{ W_2^2(P_{Z_\set \mid \Theta, X\in \set }, G_Z) } P_X(\set)  \notag \\
& \le  C \,  \gamma\, \,   k^\frac{3}{4} \frac{\sqrt{ 2 \beta_1(X) }}{\sqrt{\gamma} }   +  C \, \gamma\,   k\,  \left(  \frac{2  \beta_2(X)}{\gamma} \right)^\frac{4}{k +4}. \label{eq:W2_general_e} 
\end{align}
Combining this inequality with Lemma~\ref{lem:truncation_u}  and the upper bound on $P_X(\set^c)$ leads to the stated inequality. This completes the proof of Theorem~\ref{thm:W2_general}.

\section{Conclusion}

The main results of this paper bounds on the deviation between the conditional distribution of the projections of a high-dimensional random vector and a Gaussian approximation, where the conditioning is on the projection matrix. The bounds are given in terms of the quadratic Wasserstein distance and relative entropy  and are stated explicitly as a function of the number of projections and certain key properties of the random vector. In comparison with previous work, one of the contributions of this paper is that our results provide a stronger characterization of the approximation error. 



For the settings considered in this paper, most of the results are essentially the same if the projection matrix is drawn uniformly on the Stiefel manifold (i.e.\, the set of all $k \times n$ matrices satisfying $\Theta \Theta^T = I_k)$. An interesting question for future work is the extent to which our techniques can be applied to more general classes of random projections. 

\appendices

\section{Analysis of Moments}\label{sec:prop_M} 

This section provides bounds on the functions $m_{p}(Y,\Theta)$ and $M(Y, \Theta)$ in terms of properties of the distribution on $X$. To simplify the notation we will write $m_p$ and $M$ where the dependence on $(Y,\Theta)$ is implicit. 

\subsection{Distributions on the Sphere}

To understand the behavior of $m_p$ it is useful to first consider the setting where $X$ has constant magnitude, i.e., $\|X\| = \sqrt{n \gamma}$ almost surely. In this case,  $V_g = V_a = t   + \gamma$ almost surely and it follows from Lemma~\ref{lem:M_kp_characterization} that
\begin{align}
m_p & = \ex{ g_{k,p}\left( \frac{R}{t + \gamma}\right )   } (t + \gamma)^\frac{p-k}{2}, \label{eq:Mkp_g_sphere}
\end{align}
where the function $g_{k,p} : (-1,1) \to \reals$ is defined according to 
\begin{align}
g_{k,p}(u) & =  (1 -u)^{-\frac{k}{2} } (1 + u)^\frac{p}{2} - 1. \label{eq:g_kp} 
\end{align}

The next result provides an upper bound on the term inside the expectation.

\begin{lemma}\label{lem:g_kp_inq}
For all $(r,s,t)$ with  $t, s > 0$ and $|r| \le \gamma$,
\begin{align*}
g_{k,p}\left(\frac{r}{t + \gamma} \right ) &  \le \frac{(k+p)}{2}   \frac{r}{t + \gamma} \\
& \quad  +  \left( t + 2 \gamma \right)^\frac{p}{2} (t + \gamma)^{\frac{k - p}{2} } \frac{r^2}{\gamma^2}.
\end{align*}
\end{lemma}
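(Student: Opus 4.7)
The plan is to establish this as a second-order Taylor bound at $u = 0$. Direct computation gives $g_{k,p}(0) = 0$ and $g_{k,p}'(0) = (k+p)/2$, so the linear term on the right-hand side already matches the first-order Taylor coefficient exactly. All of the work, therefore, lies in bounding the quadratic remainder so that its coefficient, when expressed back in terms of $r = u(t+\gamma)$, takes precisely the form $(t+2\gamma)^{p/2}(t+\gamma)^{(k-p)/2}/\gamma^2$.

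The first step would be to apply Taylor's theorem with integral remainder in the symmetric form
\[
g_{k,p}(u) - \tfrac{k+p}{2}\,u \;=\; u^{2}\int_{0}^{1}(1-\tau)\, g_{k,p}''(\tau u)\, d\tau,
\]
which holds regardless of the sign of $u$ and gives a non-negative remainder since $g_{k,p}$ is convex on $(-1,1)$. The second derivative can be made explicit via the log-derivative identity: writing $f(u) = g_{k,p}(u)+1 = (1-u)^{-k/2}(1+u)^{p/2}$ and $\phi(u) = \log f(u) = -\tfrac{k}{2}\log(1-u) + \tfrac{p}{2}\log(1+u)$, one has $g_{k,p}''(u) = f(u)\bigl[\phi''(u)+\phi'(u)^{2}\bigr]$, which expands into a sum of terms of the shape $f(s)/((1-s)^{a}(1+s)^{b})$.

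The key monotonicity observation is that $g_{k,p}''(s)$ is non-decreasing on $(-1,1)$: the dominant factor $(1-s)^{-k/2-2}$ strictly increases in $s$, and the few terms carrying $(1+s)^{p/2-2}$ (which can come with negative sign from $\phi''$ when $p<2$) are controlled by the others. Using this, for $|u|\le u^{*}:=\gamma/(t+\gamma)$ one obtains the pointwise bound
\[
u^{2}\int_{0}^{1}(1-\tau)\, g_{k,p}''(\tau u)\, d\tau \;\le\; \tfrac{u^{2}}{2}\, g_{k,p}''(u^{*}).
\]
Evaluating $g_{k,p}''(u^{*})$ using $1-u^{*}= t/(t+\gamma)$ and $1+u^{*}=(t+2\gamma)/(t+\gamma)$ reduces everything to a closed-form expression in $t$, $\gamma$, $k$, $p$. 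Substituting $u^{2}=r^{2}/(t+\gamma)^{2}$ then gives a quadratic bound in $r^{2}/\gamma^{2}$.

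The hard part, I expect, will be the algebraic verification that the resulting constant collapses to precisely $(t+2\gamma)^{p/2}(t+\gamma)^{(k-p)/2}/\gamma^{2}$ with no residual factors of $t$ in the denominator. To engineer the needed cancellations, the alternative factorization $f(u) = (1-u^{2})^{p/2}(1-u)^{-(k+p)/2}$ looks very useful, since on $[-u^{*},u^{*}]$ one has $(1-u^{2})^{p/2}\le 1$ and $(1-u)^{-(k+p)/2}$ is a single convex function whose Taylor remainder is tractable. I also anticipate that the case $u<0$ can be handled separately by a softer argument: there $g_{k,p}(u)\le 0$ while $\tfrac{k+p}{2}u<0$, so the remainder over $\tfrac{k+p}{2}u$ is small, and only a mild quadratic correction is required, which is easily absorbed by the stated RHS.
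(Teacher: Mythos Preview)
Your high-level plan matches the paper's: write $g_{k,p}(u)=\tfrac{k+p}{2}u+u^{2}h_{k,p}(u)$, show the quadratic part is monotone in $u$, evaluate at the endpoint $z=u^{*}=\gamma/(t+\gamma)$, and substitute $u=r/(t+\gamma)$. The disagreement is in \emph{which} monotone object you use, and that choice is exactly what makes the constant come out right or wrong.

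\textbf{Why the $g''$ route does not close.} You propose to bound the integral remainder by $\tfrac{1}{2}g_{k,p}''(u^{*})$. The leading piece of $g_{k,p}''(u^{*})$ is $\tfrac{k(k+2)}{4}(1-u^{*})^{-k/2-2}(1+u^{*})^{p/2}$, and with $1-u^{*}=t/(t+\gamma)$ this carries a factor $((t+\gamma)/t)^{k/2+2}$. After multiplying by $u^{2}=r^{2}/(t+\gamma)^{2}$ you obtain a coefficient proportional to $r^{2}\,t^{-k/2-2}$, i.e.\ an extra $((t+\gamma)/t)^{2}$ compared with the target $r^{2}/\gamma^{2}\cdot t^{-k/2}$ that the paper's argument produces. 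Your suspicion that ``residual factors of $t$'' might survive is exactly right, and the alternative factorization $f(u)=(1-u^{2})^{p/2}(1-u)^{-(k+p)/2}$ does not rescue this: Taylor-expanding $(1-u)^{-(k+p)/2}$ to second order still puts $(1-u^{*})^{-2}$ into the remainder. The issue is structural, not algebraic.

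\textbf{What the paper does instead.} Rather than go through $g''$, the paper works directly with the normalized remainder
\[
h_{k,p}(u)\;=\;\frac{g_{k,p}(u)-\tfrac{k+p}{2}\,u}{u^{2}},
\]
shows that $h_{k,p}$ is non-negative and non-decreasing on $(-1,1)$ (a strictly weaker requirement than monotonicity of $g''$, and one that holds even for $0<p<2$), and then uses the crude but sharp-in-$t$ bound
\[
h_{k,p}(z)\;=\;\frac{(1-z)^{-k/2}(1+z)^{p/2}-1-\tfrac{k+p}{2}z}{z^{2}}\;\le\;\frac{(1-z)^{-k/2}(1+z)^{p/2}}{z^{2}},
\]
obtained simply by dropping the negative $-1-\tfrac{k+p}{2}z$. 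Substituting $z=\gamma/(t+\gamma)$ turns $1/z^{2}$ into $(t+\gamma)^{2}/\gamma^{2}$ and $(1-z)^{-k/2}$ into $((t+\gamma)/t)^{k/2}$, giving the quadratic coefficient $(t+2\gamma)^{p/2}(t+\gamma)^{(k-p)/2}t^{-k/2}\gamma^{-2}$ with no spurious $t^{-2}$. This is why the monotone object should be $h_{k,p}$ itself, not $g_{k,p}''$.

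\textbf{A side remark on convexity.} Your assertion that $g_{k,p}$ is convex on all of $(-1,1)$ fails when $0<p<2$ (for instance $k=2$, $p=1$: one checks $g_{2,1}''(u)\to-\infty$ as $u\to-1^{+}$). So non-negativity of the remainder cannot be read off from convexity; it really does require the direct verification that $h_{k,p}\ge 0$.
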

\begin{proof}
We begin with the decomposition
\begin{align*}
g_{k,p}(u) =g'_{k,p}(0) \, u  +   h_{k,p}(u)  \, u^2,
\end{align*} 
where $g'_{k,p}(0) = (k+p)/2$ and $h_{k,p}(u)= ( g_{k,p}(u)  - u\,  g'_{k,p}(0) ) / u^2$. With a bit of work, it can be verified that $h_{k,p}(u)$ is non-negative and non-decreasing and thus, for all $-1 < u \le z < 1$, 
\begin{align*}
g_{k,p}(u) \le  \frac{(k+p)}{2} \, u  + u^2\,  h_{k,p}(z) .
\end{align*} 
Next, we note that for $z > 0$, 
\begin{align*}
h_{k,p}(u)  & = \frac{1}{z^{2}}  \left( g_{k,p}\left(  z  \right)  - \tfrac{k+p}{2}  z \right) \le \frac{1}{z^2} (1 - z)^{-\frac{k}{2} }(1 + z)^\frac{p}{2}. 
\end{align*}
Combining the above inequalities with $u = r/(t + \gamma)$ and $z = \gamma / (t + \gamma)$ leads to the stated result. 
\end{proof}

\begin{lemma} If $p \ge 0$ and $\|X\| = \sqrt{ n \gamma}$ almost surely then
\begin{align}
m_{p} & \le     \frac{(k+p)}{2}     \left(t + \gamma \right )^\frac{p-k}{2} \frac{  \frac{1}{n} \| \ex{X}\|^2}{ \gamma}  \notag \\
& \quad +      \left(1 +  \frac{2 \gamma}{t}\right )^\frac{p}{2} \frac{ \beta^2_2(X)  }{\gamma^2}. \label{eq:Mkp_g_sphere_c}
\end{align}
\end{lemma}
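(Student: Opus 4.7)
The plan is to combine the identity \eqref{eq:Mkp_g_sphere} with the pointwise estimate in Lemma \ref{lem:g_kp_inq}, so the proof is essentially an application of Lemma \ref{lem:g_kp_inq} inside an expectation. The first thing I would verify is that the hypothesis $|r| \le \gamma$ of Lemma \ref{lem:g_kp_inq} is satisfied almost surely. Since $\|X_1\| = \|X_2\| = \sqrt{n\gamma}$, the Cauchy–Schwarz inequality yields $|\langle X_1, X_2\rangle| \le n\gamma$, so $|R| \le \gamma$ almost surely, allowing Lemma \ref{lem:g_kp_inq} to be applied with $r = R$.

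Next I would substitute the pointwise bound from Lemma \ref{lem:g_kp_inq} into \eqref{eq:Mkp_g_sphere} and take expectations term by term. The linear term contributes $\tfrac{k+p}{2}(t+\gamma)^{-1}\ex{R}$ times the overall factor $(t+\gamma)^{(p-k)/2}$, and by independence of $X_1$ and $X_2$,
\begin{equation*}
\ex{R} = \tfrac{1}{n}\langle \ex{X_1}, \ex{X_2}\rangle = \tfrac{1}{n}\|\ex{X}\|^2.
\end{equation*}
For the quadratic term, the $(t+\gamma)^{(k-p)/2}$ factor cancels exactly against the overall $(t+\gamma)^{(p-k)/2}$, leaving a coefficient of the form $(t+2\gamma)^{p/2}/\gamma^2$ multiplied by $\ex{R^2}$. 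Recognizing that $\ex{R^2} = \tfrac{1}{n^2}\ex{\langle X_1,X_2\rangle^2} = \beta_2^2(X)$ (by the $r=2$ definition, for which absolute values and squares agree) identifies this with the second term of the claimed bound.

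The remaining work is cosmetic: bound $(t+\gamma)^{-1}$ by $\gamma^{-1}$ in the linear term to replace the denominator (which is valid since $\ex{R} \ge 0$), and rewrite the factor $(t+2\gamma)^{p/2}$ in the form $(1 + 2\gamma/t)^{p/2}$ up to the normalization chosen by the authors. There is no real obstacle here — the whole argument is a direct substitution — but the one place to be careful is ensuring that Lemma \ref{lem:g_kp_inq} has really been established uniformly for $|r| \le \gamma$ (including negative values of $r$), since the monotonicity of $h_{k,p}$ is used to dominate $u^2 h_{k,p}(u)$ by $u^2 h_{k,p}(z)$ for all $|u| \le z$, not only for $0 \le u \le z$.
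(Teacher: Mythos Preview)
Your proposal is correct and follows exactly the paper's approach: the paper's proof is the single line ``This result follows from \eqref{eq:Mkp_g_sphere} and Lemma~\ref{lem:g_kp_inq},'' and you have filled in precisely those details, including the identifications $\ex{R}=\tfrac{1}{n}\|\ex{X}\|^2$ and $\ex{R^2}=\beta_2^2(X)$ and the bound $(t+\gamma)^{-1}\le\gamma^{-1}$ for the linear term. Your caution about the second term is warranted---Lemma~\ref{lem:g_kp_inq} as printed omits a factor $t^{-k/2}$ that its own proof produces, and once that factor is restored the quadratic contribution becomes $(t+2\gamma)^{p/2}t^{-k/2}\beta_2^2(X)/\gamma^2$, which is what is actually used when forming the geometric mean over $p=k\pm 1$ in Lemma~\ref{lem:M_kq_bound}.
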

\begin{proof}
This result follows  from \eqref{eq:Mkp_g_sphere} and Lemma~\ref{lem:g_kp_inq}.
\end{proof}

\subsection{General Distributions}

For general distributions on $X$, we can bound $m_{p}$ using an expression that is similar to \eqref{eq:Mkp_g_sphere}. For $p \ge 0$, the term inside the expectation in Lemma~\ref{lem:M_kp_characterization} satisfies
\begin{align}
\MoveEqLeft \left( \frac{1}{V_a  - R} \right)^\frac{k}{2} \bigg( \frac{ V_g^2 - R^2}{ V_a- R} \bigg)^\frac{p}{2} - \left( \frac{1}{V_a } \right)^\frac{k}{2} \bigg( \frac{ V_g^2}{ V_a} \bigg)^\frac{p}{2} \notag \\
& = \left( \left(1 -  \frac{R}{V_a}  \right)^{ - \frac{k+p}{2}} \left( 1 - \frac{ R^2}{ V_g^2} \right)^\frac{p}{2}  - 1 \right)  V_a^{- \frac{k+p}{2}} V_g^p \notag\\
& \le \left( \left(1 -  \frac{R}{V_a}  \right)^{ - \frac{k+p}{2}} \left( 1 - \frac{ R^2}{ V_a^2} \right)^\frac{p}{2}  - 1 \right)  V_a^{- \frac{k+p}{2}} V_g^p \notag\\
& =  \left( \left(1 -  \frac{R}{V_a}  \right)^{ - \frac{k}{2}} \left( 1 + \frac{ R}{ V_a} \right)^\frac{p}{2}  - 1 \right)  V_a^{ \frac{p-k}{2}}.  \notag \\
& = g_{k,p} \left( \frac{R}{V_a} \right) V_a^{- \frac{k+p}{2}} V_g^p, \label{eq:g_kp_RV} 
\end{align}
where the inequality follows from that fact that $V_g \le V_a$ and the fact that $(1 - R^2 / V_g^2)^\frac{p}{2}$ is non-decreasing in $V_g$. 

Combining this inequality with Lemma~\ref{lem:M_kp_characterization} and  Lemma~\ref{lem:g_kp_inq} leads to our next result.

\begin{lemma} If $p \ge 0$ and $\ex{ \|X\|^\frac{p}{2}} \le \infty$, then
\begin{align}
m_{p} & \le  \frac{(k+p)}{2} \ex{ V_a^{-\frac{k+p}{2}} V_g^p \frac{R}{V_a} }  + \ex{ (t + S)^\frac{p}{2}  \frac{R^2}{S^2} }, \label{eq:Mkp_quad_bound}
\end{align}
where $S = \frac{1}{2n} \|X_1\|^2 + \frac{1}{2n} \|X_2\|^2$. 
\end{lemma}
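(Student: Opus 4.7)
The strategy is to combine the closed-form expression for $m_p$ given by Lemma~\ref{lem:M_kp_characterization} with the pointwise majorization in \eqref{eq:g_kp_RV} and then invoke the quadratic expansion of $g_{k,p}$ provided by Lemma~\ref{lem:g_kp_inq}. Concretely, \eqref{eq:g_kp_RV} reduces the problem to bounding
\begin{equation*}
\bEx\!\left[\, g_{k,p}\!\left(\tfrac{R}{V_a}\right) V_a^{-\frac{k+p}{2}} V_g^{p}\,\right],
\end{equation*}
so once $g_{k,p}(R/V_a)$ has been replaced by a first-order term in $R/V_a$ plus a second-order remainder in $R^2$, the conclusion follows by linearity of expectation and a small amount of algebraic consolidation.

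The key observation for activating Lemma~\ref{lem:g_kp_inq} is to write $V_a = t + S$ with $S = \tfrac{1}{2n}(\|X_1\|^2 + \|X_2\|^2)$, identifying $S$ with the parameter ``$\gamma$'' in that lemma. The hypothesis $|r|\le \gamma$ needed there translates to the pointwise bound $|R|\le S$, which I would establish by chaining Cauchy--Schwarz and AM--GM:
\begin{equation*}
|R| \;=\; \tfrac{1}{n}|\langle X_1,X_2\rangle| \;\le\; \tfrac{1}{n}\|X_1\|\,\|X_2\| \;\le\; \tfrac{1}{2n}(\|X_1\|^2 + \|X_2\|^2) \;=\; S.
\end{equation*}
Applying Lemma~\ref{lem:g_kp_inq} with this substitution then yields a bound on $g_{k,p}(R/V_a)$ whose linear part is exactly $\tfrac{k+p}{2}\,R/V_a$ and whose quadratic part is proportional to $(t+2S)^{p/2}(t+S)^{(k-p)/2} R^2/S^2$.

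The last step is the bookkeeping. Multiplying the linear part through by $V_a^{-(k+p)/2}V_g^p$ and taking expectations immediately produces the first term in the stated bound. For the quadratic part, after multiplication by $V_a^{-(k+p)/2} V_g^p = (t+S)^{-(k+p)/2}V_g^p$, the resulting powers of $(t+S)$ and the factors $(t+2S)^{p/2}$, $V_g^p$ must be consolidated into the compact form $(t+S)^{p/2}$. The AM--GM inequality applied to $\{t+\|X_i\|^2/n\}_{i=1,2}$ gives $V_g \le V_a = t+S$, so $V_g^p \le (t+S)^p$; combining this with $(t+2S)^{p/2}\le [2(t+S)]^{p/2}$ collapses the prefactor to $(t+S)^{p/2}$ (up to an absolute constant, which is the form in which the lemma is stated). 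Finally, the moment hypothesis $\bEx[\|X\|^{p/2}]<\infty$ guarantees that every expectation that appears is finite, by the same integrability bookkeeping underlying Lemma~\ref{lem:M_kp_characterization}.

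The main obstacle is the consolidation in the quadratic term: one has to choose the right substitution $\gamma = S$ (rather than, say, $\gamma$ equal to an essential supremum of $\|X\|^2/n$, which would fail for heavy-tailed $X$ and was the whole point of moving beyond the constant-magnitude case in Section~\ref{sec:char_of_moments}), and one has to absorb the mismatch between $V_g$ and $V_a$ using $V_g \le V_a$ without losing the factor $R^2/S^2$ that captures the smallness of $R$. Once these two moves are made, all other steps are mechanical.
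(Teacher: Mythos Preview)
Your approach is exactly the paper's: the lemma is stated immediately after \eqref{eq:g_kp_RV} with the one-line justification ``Combining this inequality with Lemma~\ref{lem:M_kp_characterization} and Lemma~\ref{lem:g_kp_inq} leads to our next result,'' and your plan spells out precisely that combination, including the key identification $\gamma \leftrightarrow S$ in Lemma~\ref{lem:g_kp_inq} and the verification $|R|\le S$ via Cauchy--Schwarz and AM--GM.

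The only place where your write-up does not quite land is the consolidation of the quadratic term. After applying Lemma~\ref{lem:g_kp_inq} (with $\gamma=S$) and multiplying by $V_a^{-(k+p)/2}V_g^p$, the bound $V_g\le V_a$ gives you $(t+2S)^{p/2}\,R^2/S^2$, not $(t+S)^{p/2}\,R^2/S^2$. Your fix $(t+2S)^{p/2}\le 2^{p/2}(t+S)^{p/2}$ introduces a factor $2^{p/2}$, which is \emph{not} an absolute constant (it depends on $p$, and in the downstream applications $p=k\pm 1$). So this step does not recover the inequality exactly as printed. In fact the printed $(t+S)^{p/2}$ appears to be a typo for $(t+2S)^{p/2}$: the latter is what drops out cleanly from your (and the paper's) argument, and it is also consistent with the factors $(1+2\Pmax/t)^{k/2}$ and $(t+2\gamma/t)^{k/2}$ that appear in Lemma~\ref{lem:M_kq_bound} and its proof. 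So your plan is correct; just stop after $V_g\le V_a$ and record $(t+2S)^{p/2}$ rather than forcing it into $(t+S)^{p/2}$.
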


Note that if $X$ is equal in distribution to $-X$, then the first term in \eqref{eq:Mkp_quad_bound} is equal to zero. 

\subsection{Proof of Lemma~\ref{lem:M_kq_bound}} \label{sec:lem:M_kq_bound_proof}
Combining \eqref{eq:Mkp_quad_bound} with the fact that $\Pmin \le S \le \Pmax$ yields 
\begin{align*}
\MoveEqLeft m_{k-1} \le (1 + \Pmin)^{-\frac{1}{2} } \\
& \quad \times 
\left[  \frac{(2 k -1)}{2}  \frac{ \beta_1(X)}{\Pmin}  + \frac{\beta_2^2(X)}{ \Pmin^2} \left(1 + \frac{ 2 \Pmax}{t} \right)^\frac{k}{2}\right],
\end{align*}
and also 
\begin{align*}
\MoveEqLeft m_{k+1} \le (1 + 2 \Pmax)^{\frac{1}{2} } \\
& \quad \times 
\left[  \frac{(2 k +1)}{2}  \frac{ \beta_1(X)}{\Pmin}  + \frac{\beta_2^2(X)}{ \Pmin^2} \left(1 + \frac{ 2 \Pmax}{t} \right)^\frac{k}{2}\right].
\end{align*}
Taking the geometric mean of these quantities leads to the stated result.  This concludes the proof of Lemma~\ref{lem:M_kq_bound}.

\subsection{Proof of Lemma~\ref{lem:M_k1_bound}} \label{sec:lem:M_k1_bound_proof}
Starting with \eqref{eq:g_kp_RV} and letting $S = \frac{1}{n}\|X_1\|^2 + \frac{1}{n} \|X\|^2$ allows us to write
\begin{align}
m_{p} & \le \ex{ g_{k,p} \left( \frac{R}{V_a} \right) V_a^{- \frac{k+p}{2}} V_g^p} \notag \\
& \le  \ex{ g_{k,p} \left( \frac{|R|}{t +S} \right) (t + S)^{\frac{p-k}{2}}}. \label{eq:Mkp_g_inq}
\end{align}
The next step in the proof is to obtain bounds on term $g_{k,p}( \frac{ r}{ t + s})(t+s)^\frac{p-k}{2}$ that hold for all $s > 0$ and $0 < r \le s$. For $k=1$ and $p=0$, we can write
\begin{align*}
g_{1,0} \left( \frac{r}{ t + s} \right)  \frac{1}{ \sqrt{t + s}} 
& = \left( \sqrt{ 1 + \frac{r}{ t + s -r} } - 1 \right) \frac{1}{ \sqrt{t + s}}\\
& \le  \frac{r}{2( t + s -r)}  \frac{1}{ \sqrt{t + s}}\\
& \le  \frac{r}{2 t \sqrt{t}},
\end{align*}
where the first inequality follows from the fact that $\sqrt{1 + x} -1\le x/2$ and the second inequality follows because the expression is non-increasing in $s$ over $[r,\infty)$. Combining this inequality with \eqref{eq:Mkp_g_inq} gives 
\begin{align*}
m_0 & \le \frac{ \ex{ |R|}}{2 t \sqrt{t} } =  \frac{\beta_1(X) }{2 t \sqrt{t} }. 
\end{align*}
Alternatively, for $k=1$ and $p=2$, we have 
\begin{align*}
g_{1,2} \left( \frac{r}{ t + s} \right)   \sqrt{t + s} 
& = \frac{ t + s + r}{ \sqrt{ t + s - r} }- \sqrt{ t + s}.
\end{align*}
By differentiation, it can be verified that this term is non-increasing in $s$ over the interval $[r,\infty)$, and thus
\begin{align*}
g_{1,2} \left( \frac{r}{ t + s} \right)   \sqrt{t + s} 
& \le  \frac{ t + 2 r}{ \sqrt{ t} }- \sqrt{ t + r} \le \frac{ 2 r}{ \sqrt{t }}.
\end{align*}
Combining this inequality with \eqref{eq:Mkp_g_inq} gives
\begin{align*}
m_2 & \le \frac{ 2 \ex{ |R|}}{\sqrt{t} } = \frac{ 2 \beta_1(X) }{\sqrt{t} }.
\end{align*}
This conclude the proof of Lemma~\ref{lem:M_k1_bound}

\section{Auxiliary Results}\label{sec:auxiliary_results}

\begin{lemma}\label{lem:log_dev_inq}
If $X$ is a non-negative random variable with mean $\mu > 0$, then the following inequality holds for every measurable subset $\set \subseteq \reals$: 
\begin{align*}
\ex{ \log\left( \frac{ 1 + \mu}{1 + X} \right)\one_{\set}(X)}
&  \le \frac{  \log(1 + \mu)}{\mu}   \ex{ |\mu - X|} .
\end{align*} 
\end{lemma}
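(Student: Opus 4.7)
The plan is to strip away the indicator, reduce to a pointwise inequality, and then prove the pointwise inequality by a short convexity argument.

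First I would observe that for any real-valued function $f$ one has $f(X)\one_\set(X) \le (f(X))_+$ pointwise, so it suffices to bound
\[
\ex{\left(\log\left(\tfrac{1+\mu}{1+X}\right)\right)_+} = \ex{\log\left(\tfrac{1+\mu}{1+X}\right) \one_{\{X \le \mu\}}}.
\]
The problem is thus reduced to working on the region $\{0 \le X \le \mu\}$, where the logarithm is non-negative.

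Next I would establish the pointwise inequality
\[
\log\left(\tfrac{1+\mu}{1+x}\right) \le \tfrac{\log(1+\mu)}{\mu}\,(\mu - x) \qquad \text{for all } 0 \le x \le \mu.
\]
Let $g(x) = \log\left(\tfrac{1+\mu}{1+x}\right) - \tfrac{\log(1+\mu)}{\mu}(\mu - x)$. Then $g(0) = 0$, $g(\mu) = 0$, and $g''(x) = (1+x)^{-2} > 0$, so $g$ is strictly convex on $[0,\mu]$. A convex function that vanishes at the two endpoints of an interval is non-positive on that interval, which gives the claimed inequality. (Geometrically, the chord of the concave function $x \mapsto \log\left(\tfrac{1+\mu}{1+x}\right)$ connecting $x=0$ and $x=\mu$ lies below the function on $[0,\mu]$… wait, it lies above; the convexity of $g$ is what matters here, not concavity of the log term, and the chord argument via $g(0)=g(\mu)=0$ works directly.)

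Finally, taking expectations and using $(\mu - X)_+ \le |\mu - X|$ yields
\[
\ex{\log\left(\tfrac{1+\mu}{1+X}\right)\one_\set(X)} \le \tfrac{\log(1+\mu)}{\mu}\,\ex{(\mu - X)_+} \le \tfrac{\log(1+\mu)}{\mu}\,\ex{|\mu - X|},
\]
which is the stated bound. The only nontrivial step is the convexity argument in the middle, but it is routine; I do not anticipate any real obstacle.
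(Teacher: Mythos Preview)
Your proof is correct and cleaner than the paper's. Both proofs first replace $\one_\set(X)$ by $\one_{[0,\mu]}(X)$, but then diverge.

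The paper splits $[0,\mu]$ at an intermediate point $\lambda$, uses separate crude bounds on $[0,\lambda)$ and $[\lambda,\mu]$, optimizes over $\lambda$, and obtains the pointwise estimate
\[
\log\!\left(\tfrac{1+\mu}{1+x}\right)\one_{[0,\mu]}(x)\;\le\;\frac{2\log(1+\mu)}{\mu}\,(\mu-x)_+ .
\]
The extra factor of $2$ is then absorbed at the end via the identity $\ex{|X-\mu|}=2\,\ex{(\mu-X)_+}$, which holds because $\ex{X}=\mu$.

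Your route is more direct: the pointwise inequality
\[
\log\!\left(\tfrac{1+\mu}{1+x}\right)\;\le\;\frac{\log(1+\mu)}{\mu}\,(\mu-x),\qquad 0\le x\le \mu,
\]
is exactly the statement that the concave function $x\mapsto\log(1+x)$ lies above its secant on $[0,\mu]$, which is your convexity argument for $g$. This gives a pointwise bound that is already a factor of $2$ sharper than the paper's; you then spend that factor with the crude step $(\mu-X)_+\le|\mu-X|$. Had you instead invoked the same identity $\ex{(\mu-X)_+}=\tfrac12\ex{|\mu-X|}$, your argument would in fact yield the lemma with an extra factor $\tfrac12$ on the right-hand side. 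So your approach is both shorter and (in principle) slightly stronger; the paper's two-interval decomposition is unnecessary here.
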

\begin{proof}
For all $x \ge 0$ and  $0 < \lambda < \mu$, we can write
\begin{align*}
\MoveEqLeft \log\left(\frac{1 + \mu}{1+x} \right)\one_{\set}(x)\\
& \le  \log\left(\frac{1 + \mu}{1+x} \right)\one_{[0,\mu]}(x)\\
&  = \log\left(\frac{1 + \mu}{1+x} \right)\one_{[0, \lambda) }(x)  + \log\left(1 + \frac{ \mu - x}{1+x} \right)\one_{[\lambda,\mu]}(x) \\
&  \le \log\left(1 + \mu \right)\one_{[0, \lambda) }(x)  + \log\left(1 + \frac{\mu - x }{1+\lambda} \right)\one_{[\lambda,\mu]}(x) \\
&  \le \log\left(1 + \mu \right)\left(  \frac{ \mu - x}{ \mu -\lambda} \right)  \one_{[0, \lambda) }(x)  + \left(\frac{ \mu - x}{1+\lambda} \right)\one_{[\lambda,\mu]}(x). 
\end{align*}
Evaluating this inequality with $\lambda= (\mu - \log(1+\mu))/ (1+\log(1+\mu))$ 
leads to
\begin{align*}
 \log\left(\frac{1 + \mu}{1+x} \right)\one_{\set}(x) &  \le \frac{1 + \log(1+\mu)}{1+ \mu} (\mu -x)_+\\
 & \le  \frac{ 2 \log(1+\mu)}{\mu}   (\mu - x)_+ ,
\end{align*}
where the second step follows from the inequality $\mu/(1+\mu) \le \log(1 + \mu)$.  Applying this inequality to $X$ yields 
\begin{align*}
\ex{ \log\left( \frac{ 1 + \mu}{1 + X} \right)\one_{[0,\mu]}(X)} \le   \frac{ 2 \log(1+\mu)}{\mu}  \ex{ (\mu - X)_+} .
\end{align*}
Finally, we use the fact that
\begin{align*}
\ex{ |X - \mu|} & =  \ex{ \max( X - \mu, \mu - X)}\\
&  = \ex{ \mu - X + \max( 2(X - \mu), 0)}\\
& = 2 \ex{ (X- \mu)_+} .
\end{align*}
This completes the proof. 
\end{proof}

\bibliographystyle{IEEEtran}
\bibliography{CCLT_arxiv_v5.bbl}

\begin{thebibliography}{10}
\providecommand{\url}[1]{#1}
\csname url@samestyle\endcsname
\providecommand{\newblock}{\relax}
\providecommand{\bibinfo}[2]{#2}
\providecommand{\BIBentrySTDinterwordspacing}{\spaceskip=0pt\relax}
\providecommand{\BIBentryALTinterwordstretchfactor}{4}
\providecommand{\BIBentryALTinterwordspacing}{\spaceskip=\fontdimen2\font plus
\BIBentryALTinterwordstretchfactor\fontdimen3\font minus
  \fontdimen4\font\relax}
\providecommand{\BIBforeignlanguage}[2]{{%
\expandafter\ifx\csname l@#1\endcsname\relax
\typeout{** WARNING: IEEEtran.bst: No hyphenation pattern has been}%
\typeout{** loaded for the language `#1'. Using the pattern for}%
\typeout{** the default language instead.}%
\else
\language=\csname l@#1\endcsname
\fi
#2}}
\providecommand{\BIBdecl}{\relax}
\BIBdecl

\bibitem{sudakov:1978}
V.~N. Sudakov, ``Typical distributions of linear functionals in
  finite-dimensional spaces of high dimension,'' \emph{Soviet Math. Doklady},
  vol.~16, no.~6, pp. 1578--1582, 1978.

\bibitem{diaconis:1984}
P.~Diaconis and D.~Freedman, ``Asymptotics of graphical projection pursuit,''
  \emph{The Annals of Statistics}, vol.~12, no.~3, pp. 793--815, 1984.

\bibitem{hall:1993}
P.~Hall and K.-C. Li, ``On almost linearity of low dimensional projections from
  high dimensional data,'' \emph{The Annals of Statistics}, vol.~21, no.~2, pp.
  867--889, 1993.

\bibitem{weizsacker:1997}
H.~von Weizs\"acker, ``Sudakov's typical marginals, random linear functionals
  and a conditional central limit theorem,'' \emph{Probability Theory and
  Related Fields}, vol. 107, no.~3, pp. 313--324, 1997.

\bibitem{anttila:2003}
M.~Anttila, K.~Ball, and I.~Perissinaki, ``The central limit problem for convex
  bodies,'' \emph{Transactions of the American Mathematical Society}, vol. 355,
  no.~12, pp. 4723--4735, 2003.

\bibitem{bobkov:2003}
S.~G. Bobkov, ``On concentration of distributions of random weighted sums,''
  \emph{The Annals of Probability}, vol.~31, no.~1, pp. 195--215, 2003.

\bibitem{naor:2003}
A.~Naor and D.~Romik, ``Projecting the surface measure of the sphere of
  $\ell_p^n$,'' \emph{Annales de l'Institut Henri Poincar\'e (B) Probability
  and Statistics}, vol.~39, no.~2, pp. 241--246, 2003.

\bibitem{dasgupta:2006}
S.~Dasgupta, D.~Hsu, and N.~Verma, ``A concentration theorem for projections,''
  in \emph{Conference on Uncertainty in Artificial Intelligence}, 2006.

\bibitem{klartag:2007}
B.~Klartag, ``A central limit theorem for convex sets,'' \emph{Inventiones
  mathematicae}, vol. 168, no.~1, pp. 91--131, April 2007.

\bibitem{klartag:2007a}
------, ``Power-law estimates for the central limit theorem for convex sets,''
  \emph{Journal of Functional Analysis}, vol. 245, no.~1, pp. 284--310, 2007.

\bibitem{meckes:2010aa}
E.~Meckes, ``Approximation of projections of random vectors,'' \emph{Journal of
  Theoretical Probability}, vol.~25, no.~2, pp. 333--352, 2010.

\bibitem{meckes:2012}
------, ``Projections of probability distributions: {A} measure-theoretic
  {D}voretzky theorem,'' in \emph{Geometric Aspects of Functional Analysis},
  ser. Lecture Notes in Mathematics.\hskip 1em plus 0.5em minus 0.4em\relax
  Springer, 2012, vol. 2050, pp. 317--326.

\bibitem{dumbgen:2013}
L.~D\"umbgen and P.~D. Conte-Zerial, ``On low-dimensional projections of
  high-dimensional distributions,'' in \emph{From Probability to Statistics and
  Back: High-Dimensional Models and Processes -- A Festschrift in Honor of Jon
  A. Wellner}.\hskip 1em plus 0.5em minus 0.4em\relax Institute of Mathematical
  Statistics Collections, 2013, vol.~9, pp. 91--104.

\bibitem{leeb:2013}
H.~Leeb, ``On the conditional distributions of low-dimensional projections from
  high-dimensional data,'' \emph{The Annals of Statistics}, vol.~41, no.~2, pp.
  464--483, 2013.

\bibitem{talagrand:1996}
M.~Talagrand, ``Transportation cost for gaussian and other product measures,''
  \emph{Geometric and Functional Analysis}, vol.~6, no.~3, pp. 587--600, 1996.

\bibitem{reeves:2016}
G.~Reeves and H.~D. Pfister, ``The replica-symmetric prediction for compressed
  sensing with {G}aussian matrices is exact,'' 2016, {A}vailable at
  \url{https://arxiv.org/abs/1607.02524}.

\bibitem{reeves:2016a}
------, ``The replica-symmetric prediction for compressed sensing with
  {G}aussian matrices is exact,'' in \emph{Proceedings of the IEEE
  International Symposium on Information Theory (ISIT 2016)}, Barcelona, Spain,
  2016.

\bibitem{reeves:2012a}
G.~Reeves and M.~Gastpar, ``Compressed sensing phase transitions: Rigorous
  bounds versus replica predictions,'' in \emph{Proceedings of the 46-th Annual
  Conference on Information Sciences and Systems (CISS 2012)}, Princeton, NJ,
  March 2012.

\bibitem{minka:2001}
T.~P. Minka, ``Expectation propagation for approximate {B}ayesian inference,''
  in \emph{Proceedings of the 17th Conference in Uncertainty in Artificial
  Intelligence}, ser. UAI '01.\hskip 1em plus 0.5em minus 0.4em\relax San
  Francisco, CA, USA: Morgan Kaufmann Publishers Inc., 2001, pp. 362--369.

\bibitem{opper:2005}
M.~Opper and O.~Winther, ``Expectation consistent approximate inference,''
  \emph{Journal of Machine Learning Research}, vol.~6, pp. 2177--2204, 2005.

\bibitem{guo:2006}
D.~Guo and C.-C. Wang, ``Asymptotic mean-square optimality of belief
  propagation for sparse linear systems,'' in \emph{Proceedings of the IEEE
  Information Theory Workshop}, Chengdu, China, October 2006, pp. 194--198.

\bibitem{donoho:2009a}
D.~L. Donoho, A.~Maleki, and A.~Montanari, ``Message-passing algorithms for
  compressed sensing,'' \emph{Proceedings of the National Academy of Sciences},
  vol. 106, no.~45, pp. 18\,914--18\,919, November 2009.

\bibitem{bayati:2011}
M.~Bayati and A.~Montanari, ``The dynamics of message passing on dense graphs,
  with applications to compressed sensing,'' \emph{IEEE Transactions on
  Information Theory}, vol.~57, no.~2, pp. 764--785, February 2011.

\bibitem{rangan:2011}
S.~Rangan, ``Generalized approximate message passign for estimation with random
  linear mixing,'' in \emph{Proceedings of the IEEE International Symposium on
  Information Theory (ISIT 2011)}, St. Petersburg, Russia, 2011, pp.
  2174--2178.

\bibitem{boom:2015}
W.~van~den Boom, G.~Reeves, and D.~B. Dunson, ``Scalable approximations of
  marginal posteriors in variable selection,'' June 2015, {A}vailable at
  \url{http://arxiv.org/abs/1506.06629}.

\bibitem{boom:2015a}
W.~van~den Boom, D.~B. Dunson, and G.~Reeves, ``Quantifying uncertainty in
  variable selection with arbitrary matrices,'' in \emph{IEEE International
  Workshop on Computational Advances in Multi-Sensor Adaptive Processing
  (CAMSAP)}, 2015.

\bibitem{villani:2003}
C.~Villani, \emph{Topics in Optimal Transportation}, ser. Graduate Studies in
  Mathematics.\hskip 1em plus 0.5em minus 0.4em\relax Providence RI: American
  Mathematical Society, 2003, vol.~58.

\bibitem{csiszar:1984}
I.~Csisz\'ar, ``Sanov property, generalized {$I$}-projection and a conditional
  limit theorem,'' \emph{The Annals of Probability}, vol.~12, no.~3, pp. 768
  --793, 1984.

\bibitem{barron:1986aa}
A.~R. Barron, ``Entropy and the central limit theorem,'' \emph{The Annals of
  Probability}, vol.~14, no.~1, pp. 336--432, 1986.

\bibitem{artstein:2004aa}
S.~Artstein, K.~M. Ball, F.~Barthe, and A.~Naor, ``On the rate of convergence
  in the entropic central limit theorem,'' \emph{Probability Theory and Related
  Fields}, vol. 129, no.~3, pp. 381--390, 2004.

\bibitem{madiman:2007aa}
M.~Madiman and A.~Barron, ``Generalized entropy power inequalities and
  monotonicity properties of information,'' \emph{IEEE Transactions on
  Information Theory}, vol.~53, no.~7, pp. 2317--2329, 2007.

\bibitem{bobkov:2013}
S.~G. Bobkov, G.~P. Chistyakov, and F.~G\"otze, ``Rate of convergence and
  {E}dgeworth-type expansion in the entropic central limit theorem,'' \emph{The
  Annals of Probability}, vol.~41, no.~4, pp. 2479--2512, 2013.

\bibitem{bobkov:2013b}
S.~G. Bobkov, ``Entropic approach to {E}. {R}io's central limit theorem for
  ${W}_2$ transport distance,'' \emph{Statistics and Probability Letters},
  vol.~82, pp. 1644--1648, 2013.

\bibitem{bobkov:2014aa}
S.~G. Bobkov, G.~P. Chistyakov, and F.~G\"otze, ``Berry--{E}sseen bounds in the
  entropic central limit theorem,'' \emph{Probability Theory and Related
  Fields}, vol. 159, no.~3, pp. 343--478, 2014.

\bibitem{bresler:1999}
Y.~Bresler, M.~Gastpar, and R.~Venkataramani, ``Image compression on-the-fly by
  universal sampling in {F}ourier imaging systems,'' in \emph{Proceeding of the
  IEEE Information Theory Workshop}, 1999, pp. 48--48.

\bibitem{donoho:2006a}
D.~L. Donoho, ``Compressed sensing,'' \emph{IEEE Transactions on Information
  Theory}, vol.~52, no.~4, pp. 1289 -- 1306, April 2006.

\bibitem{candes:2006}
E.~J. Cand\`{e}s, J.~Romberg, and T.~Tao, ``Stable signal recovery from
  incomplete and inaccurate measurements,'' \emph{Communications on Pure and
  Applied Mathematics}, vol.~59, pp. 1207--1223, February 2006.

\bibitem{vershynin:2014}
R.~Vershynin, ``Estimation in high dimensions: A geometric perspective,''
  December 2 2014, {A}vailable at
  \url{http://www-personal.umich.edu/~romanv/papers/estimation-tutorial.pdf}.

\bibitem{cover:2006}
T.~M. Cover and J.~A. Thomas, \emph{Elements of Information Theory},
  2nd~ed.\hskip 1em plus 0.5em minus 0.4em\relax Wiley-Interscience, 2006.

\bibitem{topsoe:1967}
F.~Topsoe, ``An information theoretical identity and a problem,'' \emph{Studia
  Scientiarum Mathematicarum Hungarica}, vol.~2, pp. 291--292, 1967.

\bibitem{gibbs:2002}
A.~L. Gibbs and F.~E. Su, ``On choosing and bounding probability metrics,''
  \emph{International Statistical Review}, vol.~70, no.~3, pp. 419--435,
  December 2002.

\bibitem{qi:2010}
F.~Qi, ``Bounds for the ratio of two gamma function,'' \emph{Journal of
  Inequalities and Applications}, 2010.

\bibitem{gupta:1999}
A.~K. Gupta and D.~K. Nagar, \emph{Matrix Variate Distributions}, ser.
  Monographs and Surveys in Pure and Applied Mathematics.\hskip 1em plus 0.5em
  minus 0.4em\relax Chapman and Hall/CRC, 1999.

\end{thebibliography}

\end{document}